\documentclass[sigconf, nonacm]{acmart}
\newif\ifarxiv
\arxivtrue 
\newcommand{\shortlong}[2]{\ifarxiv{#2\xspace}\else{#1\xspace}\fi}
\newif\ifrebuttal
\rebuttaltrue
\usepackage[normalem]{ulem}
\usepackage{cancel}

\newif\ifdiff
\difffalse

\newcommand{\rebuttal}[2]{%
  \ifrebuttal
    \ifdiff
      {\color{red}\sout{#1}} {\color{blue}#2}\xspace%
    \else
      {\color{black}#2}\xspace%
    \fi
  \else
    #1\xspace%
  \fi
}

\newcommand{\rebuttalsubsubsection}[2]{%
  \ifrebuttal%
    \ifdiff%
      \subsubsection{\textcolor{red}{\sout{#1}}\textcolor{blue}{#2}}%
    \else%
      \if\relax\detokenize{#2}\relax%
      \else%
        \subsubsection{\textcolor{blue}{#2}}%
      \fi%
    \fi%
  \else%
    \subsubsection{#1}%
  \fi%
}

\usepackage{color}
\usepackage{xargs} 
\usepackage{xspace} 

\usepackage{algorithm}
\usepackage{algorithmic}
\usepackage{listings}
\usepackage{multirow}
\usepackage{amsmath}
\usepackage{breqn}
\usepackage{array}
\usepackage{subcaption}

\usepackage{apxproof}
\newtheorem{theorem}{Theorem}

\newtheoremrep{inequality}{Inequality}
\newtheoremrep{theorem}{Theorem}
\newtheoremrep{lemma}{Lemma}
\newtheoremrep{definition}{Definition}
\newtheoremrep{proposition}{Proposition}

\usepackage{enumitem}
\allowdisplaybreaks
\def\notationcolor{black} 
\newcommand{\notation}[2]{\newcommand{#1}{{\textcolor{\notationcolor}{\ensuremath{#2}}}}}

\newcommand{\ourmech}{Ours-Exp\xspace}

\notation{\mech}{\mathcal{M}} 
\notation{\mechset}{\mathcal{W}} 
\notation{\outp}{\omega} 
\notation{\lv}{v} 
\notation{\LL}{LL} 
\notation{\lvset}{\mathcal{V}} 
\notation{\range}{\text{range}} 
\notation{\epsDP}{\epsilon_{\mathrm{DP}}} 
\notation{\epsPuffer}{\epsilon_{\mathrm{puffer}}} 
\notation{\highinf}{\mathcal{H}} 
\notation{\lowinf}{\mathcal{L}} 
\notation{\randalg}{\mathcal{A}} 

\notation{\secret}{s} 
\notation{\secretset}{S} 

\notation{\prior}{\theta} 
\notation{\priorset}{\Theta} 
\notation{\ab}{a(b)}

\notation{\data}{\mathcal{D}} 
\notation{\numdata}{n} 
\notation{\universe}{\mathcal{U}} 
\notation{\Data}{\textit{Data}} 
\notation{\xor}{\oplus}
\notation{\neighborset}{\mathcal{N}} 
\notation{\dataquilt}{\mathcal{D_L}} 
\notation{\dataneighbor}{\mathcal{D_N}} 
\notation{\Dataquilt}{\textit{Data}_{\mathcal{L}}} 

\notation{\real}{\mathcal{R}} 
\usepackage[createShortEnv,conf={restate,no link to proof}]{proof-at-the-end}

\begin{document}
\title{Composition for Pufferfish Privacy}

\author{Jiamu Bai, Guanlin He, Xin Gu, Daniel Kifer, Kiwan Maeng}
\affiliation{%
  \institution{Penn State University}
}

\begin{abstract}
When creating public data products out of confidential datasets, inferential/posterior-based privacy definitions, such as Pufferfish, provide compelling privacy semantics for data with correlations. However, such privacy definitions are rarely used in practice because they do not always compose. For example, it is possible to design algorithms for these privacy definitions that have no leakage when run once but reveal the entire dataset when run more than once. We prove necessary and sufficient conditions that must be added to ensure linear composition for Pufferfish mechanisms, hence avoiding such privacy collapse. These extra conditions turn out to be differential privacy-style inequalities, indicating that achieving both the interpretable semantics of Pufferfish for correlated data and composition benefits requires adopting  differentially private mechanisms to Pufferfish. We show that such translation is possible through a concept called the $\ab$-influence curve, and many existing differentially private algorithms can be translated with our framework into a composable Pufferfish algorithm. We illustrate the benefit of our new framework by designing composable Pufferfish algorithms for Markov chains that significantly outperform prior work. 
\end{abstract}

\maketitle
\renewcommand{\shortauthors}{} 
\renewcommand{\shorttitle}{}   

\shortlong

\section{Introduction}
Differential privacy (DP)~\cite{dwork2006calibrating} is a gold standard in privacy protection
and is deployed in many important real-world applications by organizations such as the
U.S. Census Bureau~\cite{tdahdsr,ashwin08:map}, Uber~\cite{FLEX,chorus}, Apple~\cite{appledpscale}, Meta~\cite{fburlshares,yousefpour2021opacus}, Microsoft~\cite{DingKY17}, and Google~\cite{rappor,gboard,tensorflowprivacy}.
This success is not only due to differential privacy's rigorous guarantees against data reconstruction attacks \cite{dwork2006calibrating,dinur2003revealing} but also due to its composition properties \cite{PINQ,kairouz2015composition}. Composition
measures the total privacy loss associated with a collection of mechanisms $\mech_1,\dots, \mech_k$ in terms of their individual
privacy losses. For example, if each $\mech_i$ satisfies differential privacy with parameter $\epsilon_i$, the combined release of all of their outputs satisfies differential privacy with parameter $\epsilon=\sum_{i=1}^k \epsilon_i$ (i.e., the privacy parameter composes linearly). This property is critical because it allows a larger mechanism to be created out of smaller, simpler mechanisms. Without composition, performing a privacy analysis of a complex disclosure avoidance system becomes intractable \cite{jarmin2023depth}.

The semantics of differential privacy use the hypothetical/counterfactual world framework \cite{causal}: for every hypothetical database $\data$ and secret about a person (e.g., the contents of one person's record), there is an associated counterfactual database $\data^\prime$ in which the secret has been completely scrubbed from the database.
This counterfactual database is considered a risk-free baseline for that person, i.e., any data release from the counterfactual database $\data^\prime$ poses no risk to that person. The differential privacy constraints on a mechanism $\mech$ guarantee that for any pair $(\data,\data^\prime)$ of the hypothetical database $\data$ and any of its counterfactual databases $\data^\prime$, inference about a
person based on the output of $\mech(\data)$  would be nearly the same as the  inference based on $\mech(\data^\prime)$.

However, for datasets with correlations, such a guarantee may not be meaningful \cite{nofreelunch,pufferfish}. For example, consider a medical dataset $\data$ in which one person's medical status (e.g., whether the person has an infectious disease) can have a causal effect on others (e.g., they may infect close relatives, who may then go on to infect others). Removing one person's record creates a counterfactual database $\data^\prime$ that is not necessarily risk-free for that person, as other records in the counterfactual database provide strong clues about the missing record. 
For differential privacy to provide a meaningful privacy guarantee in such cases, the risk-free counterfactual database must be the one with all the records removed (one contagious individual can potentially affect all records). 
However, ensuring that a data release from a hypothetical database is similar to a release from this risk-free (empty) database would entirely ruin utility.

\emph{Inferential/posterior-based} privacy definitions have emerged as an alternative privacy concept for such settings where data are correlated~\citep{pufferfish, IP, BDP, PKDP}. 
With these privacy definitions, 
the data curator specifies a set of distributions for the dataset, and it is ensured that a Bayesian adversary using those distributions would learn little about any target individual.
However, posterior-based definitions are difficult to deploy in practice because \emph{they do not guarantee composition}. We show that it is possible to design a mechanism that has zero privacy leakage under posterior-based definitions when run once but has a good chance of revealing the entire database (i.e., no privacy) when run more than once (Section~\ref{sec:discomposition}).

To counter the lack of composition guarantees,
we study the following question, focusing on one of the popular posterior-based privacy definitions, \emph{Pufferfish} \cite{pufferfish}:
what additional constraints need to be added to Pufferfish in order to guarantee linear composition? 
We prove necessary and sufficient conditions that establish a formal link between Pufferfish and differential privacy. That is, in order to 
ensure that the Pufferfish privacy parameter $\epsPuffer$ composes linearly, 
one should design mechanisms that satisfy differential privacy with some $\epsDP$ and translate it to Pufferfish privacy.
We show that such translation is possible through a concept called the $\ab$-influence curve, and existing differential privacy mechanisms can be translated into composable Pufferfish mechanisms with our framework.
Our result generalizes the previous framework of Song et al. \cite{song2017pufferfish}, which showed the Laplace mechanism for differential privacy can be used to achieve composable Pufferfish for Markov networks. 
Thus, our work addresses a major concern hindering the development of query-answering platforms based on Pufferfish---currently, designing a new mechanism requires tedious manual analyses of how it interacts with \textit{all} other supported mechanisms. Linear composition removes this bottleneck, and our theoretical results imply that systems designers should build upon existing DP platforms (and leverage their large library of existing mechanisms) simply by rescaling the privacy parameter.
Our contributions are:
\begin{itemize}[leftmargin=*]
    \item We show that posterior-based definitions generally do not defend against privacy leakage via composition, by constructing mechanisms that have zero leakage under those definitions when run once yet provide no meaningful privacy when run multiple (as few as two) times (Section~\ref{sec:discomposition}).
    \item We prove \emph{necessary conditions} for a Pufferfish mechanism to ensure linear composition.
    Our result shows that a Pufferfish mechanism must meet a class of inequalities that we call the NfC constraints (Section~\ref{sec:conditions}), which subsume the differential privacy inequalities.
    We also show that only a subset of the NfC constraints are post-processing invariant, and those subsets still contain differential privacy-style inequalities but rule out many others. 
    The implication is that Pufferfish should be augmented with differential privacy-style inequalities in order for its privacy parameter $\epsPuffer$ to compose.
    \item We next study \emph{sufficient conditions} for composition---how to construct composable Pufferfish privacy mechanisms---for general tabular datasets (Section~\ref{sec:puffer-mech}). 
    We show that if a mechanism satisfies per-entry $\epsDP$-differential privacy,
    it also satisfies $\epsPuffer$-Pufferfish privacy with a certain $\epsPuffer$, and $\epsPuffer$ will compose linearly. 
    We introduce a concept called an \emph{$\ab$-influence curve}, which links the $\epsPuffer$ parameter to the appropriate $\epsDP$ parameter. This part is a generalization of the work of Song et al. \cite{song2017pufferfish}, whose technique was limited to the Laplace mechanism and Markov network priors. We also study how to make the \ab-curve robust to mis-specification of the data-generating priors.
    \item We empirically compare our approach to prior work on composable Pufferfish mechanisms~\cite{song2017pufferfish} to show that we can answer popularity queries over Markov chains with significantly better accuracy, as our framework allows adapting other differential privacy mechanisms that are better-suited to queries of interest.
\end{itemize}

\label{sec:intro}

\section{Background and Notation}\label{sec:background}
We use $\universe$ as the set of all possible datasets, and $\data\in\universe$ to represent a specific dataset. The dataset will be a table where rows are records and columns are attributes, except for Sections \ref{sec:discomposition} and \ref{sec:conditions}, whose results do not need $\data$ to have any particular structure. 
We use $\prior$ to represent a prior distribution over datasets (e.g., a distribution that an attacker may use to make inferences), and $\priorset$ for a set of prior distributions. We let $\Data$ denote a random variable over datasets, so that $\Pr[\Data=\data~|~\prior]$ is the probability of the data curator having the dataset $\data$ according to the prior distribution $\prior$.

Protecting privacy in the Pufferfish framework \cite{pufferfish} requires specifying  what needs to be protected. A potential secret $\secret$ is a statement, such as ``Bob has cancer'', which should be protected from an attacker. That is, an attacker should have difficulty inferring whether $\secret$ is true or not. We use the notation \emph{$\secret(\data)=\text{true}$} to indicate that in dataset $\data$, secret $\secret$ is true. When potential secrets are specified in (mutually exclusive) pairs, like  $(\secret_1,\secret_2)$, it means the goal is to prevent the attacker from figuring out whether $\secret_1$ is true or whether $\secret_2$ is true \cite{pufferfish}. Similarly, specifying $(\secret,\neg\secret)$ as a pair means the goal is to prevent an attacker from learning whether or not $\secret$ is true.
Let $\secretset$ denote a set of potential secret pairs that a data curator wishes to protect.

A mechanism $\mech$ is a deterministic or randomized algorithm whose input is a confidential dataset $\data$ and whose output is supposed to be a privacy-preserving data product (e.g., data summaries, trained machine learning models, etc.).
We let $\outp$ denote the output of a mechanism. The notation we use is summarized in Table \ref{tab:notation}.

\begin{table}[t]
\begin{center}
\begin{tabular}{|cp{0.75\linewidth}|}\hline
$\data$ & Dataset \\
$\Data$ & A random variable representing the true dataset \\
$\universe$ & Database domain (set of all possible datasets) \\
$\secretset$ & The set of all potential secret pairs  \\
$\secret$ & A potential secret \\
$\prior$ & A prior probability distribution over datasets  \\
$\priorset$ & The set of all possible prior distributions  \\
$\mech$ & A privacy-preserving mechanism or algorithm \\
$\epsPuffer$ & Pufferfish Privacy budget \\
$\epsDP$ & Differential Privacy budget \\
$\outp$ & An output from a mechanism $\mech$  \\
$\dataquilt$ & A low-influence part of $\data$ \\
$\Dataquilt$ & A random variable over $\dataquilt$ \\
\hline
\end{tabular}
\caption{Table of Notation}\label{tab:notation}
\label{tab:notation}
\vspace{-20pt}
\end{center}
\end{table}

\subsection{Differential Privacy}\label{sec:defdp}
A privacy definition is a set of restrictions that a mechanism $\mech$ has to follow. One of the most foundational is $\epsDP$-differential privacy, also known as pure differential privacy.
\begin{definition}[Pure Differential Privacy \cite{dwork2006calibrating}]\label{def:dp}
    Given a privacy parameter $\epsDP\geq 0$, a mechanism $\mech$ satisfies $\epsDP$-differential privacy if, for all pairs of datasets $\data,\data'$ that differ on the contents of one person's record, and all $\outp$ in $\range(\mech)$, 
    \[
\Pr[\mech(\data) = \outp]
\;\leq\;
e^{\epsDP}
\Pr[\mech(\data') = \outp].
\]
(Note: the probability is with respect to the randomness in $\mech$ only.)
\end{definition}
One can think of $\data$ as a hypothetical dataset and $\data^\prime$ as a counterfactual dataset in which the record of one person has been replaced. 
Differential privacy guarantees that inference about a person based on the output $\mech(\data)$ would be similar to inference based on $\mech(\data^\prime)$. Hence, if the person's influence is truly eliminated in counterfactual worlds, her privacy is protected.

Differential privacy has two important properties: \emph{post-processing invariance} and \emph{linear composition}. Post-processing invariance means that if $\mech$ satisfies $\epsDP$-differential privacy, and $f$ is any postprocessing function, the combined function $f\circ\mech$
satisfies differential privacy with the same privacy parameter $\epsDP$.
Linear composition means that if $\mech_1,\dots, \mech_k$ are a sequence of mechanisms, and each $\mech_i$ satisfies $\epsDP_i$-differential privacy, the mechanism $\mech^*$, which runs all of those mechanisms and returns their outputs, satisfies $\sum_{i=1}^k\epsDP_i$-differential privacy.
In this paper, when we say a mechanism ensures composition, we mean it ensures linear composition or better (sub-linear).
A special variant of differential privacy that is particularly useful for this paper is per-entry differential privacy:
\begin{definition}[Per-Entry DP]\label{def:pedp}
    Given a privacy parameter $\epsDP\geq 0$, a mechanism $\mech$ satisfies per-entry $\epsDP$-differential privacy if, for all pairs of tabular datasets $\data,\data'$ that differ on the contents of one attribute of one person's record, and all $\outp$ in $\range(\mech)$, 
    \[
\Pr[\mech(\data) = \outp]
\;\leq\;
e^{\epsDP}
\Pr[\mech(\data') = \outp].
\]
\end{definition}
The difference between Definitions \ref{def:dp} and \ref{def:pedp} is that in the latter, the counterfactual database is formed by only replacing part of a record (the value of one attribute) instead of the entire record.
Definition~\ref{def:pedp} can naturally extend to a group of $k$ entries, in which case the privacy linearly degrades with $k$:
\begin{definition}[Group DP]\label{def:gedp}
    A per-entry $\epsDP$-differential privacy mechanism $\mech$ achieves $k\epsDP$-differential privacy for a group of $k$ entries. That is, if $\data,\data'$ differ by $k$ entries, for all $\outp$ in $\range(\mech)$, 
    \[
\Pr[\mech(\data) = \outp]
\;\leq\;
e^{k\epsDP}
\Pr[\mech(\data') = \outp].
\]
\end{definition}
Group DP can handle correlations in the database by considering all the potentially correlated entries as a protected group,
but doing so often leaves too little utility; for example, when all the entries are possibly correlated (even loosely)---as in our previous example of a person with an infectious disease---group DP degrades the privacy linearly with the number of total entries in the database.

\subsection{Posterior-based Privacy Definitions}
Unlike differential privacy, which treats datasets as deterministic, posterior-based privacy definitions (e.g., \citep{pufferfish, IP, BDP, PKDP}) model attackers as Bayesian reasoners who view the dataset as a random variable. Some of our results concern posterior-based privacy definitions in general, while others concern a specific definition called Pufferfish privacy~\cite{pufferfish}. We first distill common properties of posterior-based privacy, and then describe a special case called Pufferfish privacy.

\begin{definition}[Posterior-based Privacy Properties]\label{def:posterior}
    A posterior-based privacy definition has a privacy parameter $\epsilon\geq 0$, a set $\priorset$ of prior distributions an attacker may use, and a collection of potential secrets $\{\secret_1,\secret_2,\dots\}$.
    The definition must use these items to define a set of restrictions that a mechanism $\mech$ must satisfy.
    Furthermore, if $\Pr(\secret_i~|~\prior)=\Pr(\secret_i~|~\mech(\data),\prior)$ for all $\secret_i$ and all $\prior\in\priorset$, then $\mech$ satisfies the definition with privacy parameter $\epsilon=0$. That is, if the posterior distribution of secrets is the same as the prior distribution, then the assessed privacy leakage is $0$.
    Note that the probability is taken with respect to randomness in both $\mech$ and $\prior$, and larger values of $\epsilon$ represent weaker privacy protections than smaller values of $\epsilon$.
\end{definition}

\begin{definition}[Pufferfish Privacy~\cite{pufferfish}]
\label{def:pufferfish}
Given a set $\priorset$ of priors, a set $\secretset$ of  potential secret pairs, and a privacy parameter $\epsPuffer\geq 0$,
a randomized mechanism $\mech$ satisfies $\epsPuffer$-Pufferfish privacy if, for all pairs of potential secrets $(\secret_i, \secret_j) \in \secretset$ and all priors $\prior \in \priorset$ such that $\Pr[\secret_i | \prior] > 0$ and $\Pr[\secret_j | \prior] > 0$, and for all possible outputs $\outp$,
\[
\Pr[\mech(\Data) = \outp \mid \secret_i, \prior]
\;\leq\;
e^{\epsPuffer}
\Pr[\mech(\Data) = \outp \mid \secret_j, \prior].
\]
\end{definition}
Let $\frac{\Pr(\secret_i~|~\prior)}{\Pr(\secret_j~|~\prior)}$ be the prior odds, specifying how likely the attacker thinks $\secret_i$ is true compared to $\secret_j$ being true before seeing the data. After seeing the data, the attacker forms the posterior odds, $\frac{\Pr(\secret_i~|~\mech(\data)=\outp, \prior)}{\Pr(\secret_j~|~\mech(\data)=\outp,\prior)}$. Pufferfish privacy guarantees that the posterior odds are at most $e^\epsPuffer$
times the prior odds, i.e., the attacker's belief changed by a factor of at most $e^\epsPuffer$. 
If the ratio between the prior and the posterior odds is the same, it means there is no privacy leakage ($\epsPuffer=0$).
Pufferfish satisfies post-processing invariance \cite{pufferfish}, but does not guarantee composition \cite{pufferfish}.
We show that this problem is very severe and can lead to privacy collapse (Section \ref{sec:discomposition}).

\subsection{Trust and Threat Model}  
We assume a trusted data curator who releases query results by running multiple privacy mechanisms on confidential data, and an adversary who attempts to infer protected secrets from the released outputs.
We assume that the adversary can leverage prior information about the data distribution and the protected secret pairs.
The role of the privacy mechanisms is to prevent the released outputs from significantly increasing the adversary's ability to distinguish between the secrets that the curator wishes to protect. 
This is the setting of Pufferfish \cite{pufferfish}, but we additionally consider the combined privacy leakage of multiple mechanisms (composition).

\section{Related Work}\label{sec:related}
\textbf{\textit{Inevitability of DP for Composition.}}
Blanc et al.~\cite{blanc2025differential} recently studied composition in a general axiomatic setting. If a privacy definition satisfies their axioms (some of which are related to composition), then any mechanism for that definition can be replaced by an  $(\epsilon,\delta)$-DP mechanism whose accuracy, on statistical tasks involving \emph{independent} records, is nearly the same. 
However, their work cannot be applied to \emph{correlated data}, and their axioms are restrictive:  pure $\epsilon$-differential privacy violates their composition axioms, and posterior-based privacy definitions almost always violate their symmetry axioms.
Thus, the work targets different privacy definitions and data assumptions from ours.

Generally, composition results outside of differential privacy are limited in scope. Bhaskar et al. \cite{noiseless} studied composition for mechanisms where the only randomness comes from the prior. Their composition result required refreshing large parts of the data (i.e., resampling  from the prior) in between queries. 
An approach by  Farokhi \cite{farokhi2021noiseless} studies a prior-free noise-free privacy definition and proves composition without restrictions, but the definition cannot always prevent a mechanism from leaking secret information.

Although arbitrary Pufferfish mechanisms will not compose \cite{pufferfish}, Song et al. \cite{song2017composition,song2017pufferfish} showed that some specific mechanisms compose with themselves. They considered the set $\priorset$ of priors to be all Markov networks with the same structure and showed that a Laplace mechanism satisfying differential privacy with a privacy parameter $\epsDP$ can satisfy Pufferfish privacy with some other parameter $\epsPuffer$ and will compose. 
Our result is a generalization of this prior result: we propose a framework which allows translating \emph{any} set of (per-entry) $\epsDP$-differential privacy mechanisms into composable $\epsPuffer$-Pufferfish privacy mechanisms for \emph{any} data prior through a concept called an $\ab$-influence curve.

\textbf{\textit{Inferential/Posterior-based Definitions.}}
This work focuses on Pufferfish privacy~\cite{pufferfish}, a popular posterior-based privacy definition that is a generalization of many other posterior-based definitions~\cite{zhang2022attribute, DependDP,BDP,IP}.
Other notable approaches related to posterior-based privacy include: adapting DP to attackers with deterministic knowledge about the data \cite{blowfish}, changing how the dissimilarities between posteriors are quantified \cite{CIP,pierquin2024renyi}, and adapting the counterfactual framework of DP to compare attacker inference about a hypothetical database to a scrubbed database \cite{DistDP,PKDP}.%

\textbf{\textit{Pufferfish Privacy and Composition.}}
Lack of composition in Pufferfish has been sidestepped in several ways. Pierquin et al. \cite{pierquin2024renyi}, Zhang et al. \cite{zhang2025sliced}, and Liang et al. \cite{liang2020pufferfish} performed a joint privacy analysis across multiple runs of mechanisms. Tao et al. \cite{zhang2025differential} proposed a framework for performing such a joint analysis. In many cases \cite{pierquin2024renyi,zhang2025sliced,zhang2025differential}, the measure of dissimilarity in posterior distributions was changed from Definition \ref{def:pufferfish} to mirror Renyi differential privacy \cite{renyidp} or $(\epsilon,\delta)$-differential privacy \cite{ourdata}. An alternative to a joint analysis is to
consider restricted types of mechanisms that compose with themselves. For example, Song et al. \cite{song2017composition,song2017pufferfish} proposed the Markov Quilt Mechanism (MQM) that adapts the Laplace mechanism to Pufferfish with Markov network priors. This mechanism was also used by Shafieinejad et al. \cite{shafieinejad2021privacy} and Cao et al. \cite{cao2017quantifying}, who adjusted the privacy budget of differentially private algorithms to account for temporal correlations in Markov chains.
In other work, the compositional properties were not studied \cite{song2017pufferfish,ding2022kantorovich,ding2024approximation}.

\section{Privacy Collapse}
\label{sec:discomposition}
We next use simple examples to illustrate that posterior-based privacy can catastrophically fail upon composition.
While pathological, the main message is that posterior-based privacy definitions do not have built-in safeguards against privacy collapse arising from composition. Necessary conditions for such safeguards are proved in Section \ref{sec:conditions} and sufficient conditions are presented in Section \ref{sec:puffer-mech}.

\textbf{\emph{Single-secret/prior Example.}} 
We start with one secret and one prior and study to what extent remedies within the posterior-based framework (e.g., adding more potential secrets or adding more priors) are effective. The proofs can be found in the \shortlong{full version \cite{nfcarxiv}}{appendix}.

\begin{example}\label{ex:one}
Consider mutually exclusive potential secrets $\secret_1$ and $\secret_2=\neg\secret_1$. Let $\priorset=\{\prior^*\}$ consists of a single prior. Let $\universe_1$ (resp., $\universe_2$) be the set of datasets for which $\secret_1$ (resp., $\secret_2$) is true and have nonzero probability under $\prior^*$. Suppose $\universe_1$ and $\universe_2$ each contain at least 2 datasets.
Define mechanism $\mech^*$ as follows. On input $\data$, 
\begin{itemize}[leftmargin=*]
    \item if $\secret_1(\data)$ is true, sample $\data^\prime$ from $\Pr(\cdot~|~\secret_2,\prior^*)$ and outputs $(\data, \data^\prime)$,
    \item if $\secret_2(\data)$ is true, sample $\data^\prime$ from $\Pr(\cdot~|~\secret_1,\prior^*)$ and outputs $(\data^\prime, \data)$. 
\end{itemize}
\end{example}
The next theorem shows this mechanism has zero leakage under posterior-based privacy definitions (see Definition \ref{def:posterior}) but can reveal the true database with finitely many (and as few as two) runs.
The intuition is that $\mech^*(\data)$ returns a tuple where the left part is always the same if $\secret_1$ is true and the right part is always the same if $\secret_2$ is true. If $\mech^*$ is run twice, and the output of the first run is different from the output of the second run, the tuple component which didn't change corresponds to the true input dataset.
\begin{theoremE}\label{the:single-prior-puff}
Let $\mech^*$ be the mechanism from Example \ref{ex:one}. Given an output $\outp$ of $\mech^*$,  $\Pr(\secret_1~|~\mech^*(\Data)=\outp,\prior^*)=\Pr(\secret_1~|~\prior^*)$ and $\Pr(\secret_2~|~\mech^*(\Data)=\outp,\prior^*)=\Pr(\secret_2~|~\prior^*)$. Thus, $\mech^*$ has zero leakage under posterior-based privacy definitions. 
However, if the mechanism is run multiple times, the expected number of runs before the input dataset is revealed is $\leq 1+\frac{1}{1-\max_\data \max_{\secret_i}\Pr(\data~|~\secret_i)}$. 
\end{theoremE}
\begin{proofE}
    Note that $\outp$ is an ordered pair of datasets $(\data_1, \data_2)$ where $\secret_1$ is true of $\data_1$ and $\secret_2$ is true of $\data_2$. Furthermore, either $\data_1$ or $\data_2$ is the true input dataset. Next, we note that:
    \begin{align*}
        & \Pr(\mech^* \text{ outputs }(\data_1, \data_2)~|~\secret_1,\prior^*) \\
        & \phantom{==}=\sum_{\data} \Pr(\data ~|~\secret_1,\prior^*)\Pr(\mech(\data)=(\data_1, \data_2)) \\
        & \phantom{==}=\Pr(\data_1 ~|~\secret_1,\prior^*)\Pr(\data_2~|~\secret_2,\prior^*)\\
        & \Pr(\mech^* \text{ outputs }(\data_1, \data_2)~|~\secret_2,\prior^*) \\
        & \phantom{==}=\sum_{\data} \Pr(\data ~|~\secret_2,\prior^*)\Pr(\mech(\data)=(\data_1, \data_2)) \\
        & \phantom{==}=\Pr(\data_1 ~|~\secret_1,\prior^*)\Pr(\data_2~|~\secret_2,\prior^*)
    \end{align*}
    so for all $\outp$,  $\Pr(\mech^* \text{ outputs }\outp~|~\secret_1,\prior^*) = \Pr(\mech^* \text{ outputs }\outp~|~\secret_2,\prior^*)$ 
    and therefore the output probabilities do not depend on the secret. Hence, the posterior distribution of a secret is the same as the prior distribution, thus there is 0 assessed leakage.

    Next, we consider multiple runs. Note that $\mech^*$ always outputs a pair $(\data_1, \data_2)$ and one of those is the true input. Thus, if $\mech^*$ is run $k$ times and one of the outputs is different from the rest, then the secret is revealed: if the first components are always the same, the secret is $\secret_1$, and if the second components are always the same, the secret is $\secret_2$.

    So, without loss of generality, suppose the true secret is $\secret_1$. Then the outputs of sequential runs of $\mech^*$ can be denoted as $(\data, \data^{(1)}_2), (\data, \data^{(2)}_2), (\data, \data^{(3)}_2), \dots$. If any $\data^{(i)}_2\neq \data^{(1)}_2$ then the secret is revealed. Thus the probability the secret is revealed on the $k^{th}$ run (for $k>1$) is $\sum_{\data^\prime} \Pr(\data^\prime~|~\secret_2,\prior^*)\Pr(\data^\prime~|~\secret_2,\prior^*)^{k-2}(1-\Pr(\data^\prime~|~\secret_2,\prior^*))$. Its expected value is then:
    \begin{align*}
        \lefteqn{\sum_{k=2}^\infty \sum_{\data^\prime} k \Pr(\data^\prime~|~\secret_2,\prior^*)^{k-1}(1-\Pr(\data^\prime~|~\secret_2,\prior^*))}\\
        &= \sum_{\data^\prime}\left(-(1-\Pr(\data^\prime~|~\secret_2,\prior^*)) + \sum_{k=1}^\infty  k \Pr(\data^\prime~|~\secret_2,\prior^*)^{k-1}(1-\Pr(\data^\prime~|~\secret_2,\prior^*))\right)\\
        &= \sum_{\data^\prime}\left(-(1-\Pr(\data^\prime~|~\secret_2,\prior^*)) + \frac{1}{1-\Pr(\data^\prime~|~\secret_2,\prior^*)}\right)\\
        &= \sum_{\data^\prime} \frac{1-(1-\Pr(\data^\prime~|~\secret_2,\prior^*))^2}{1-\Pr(\data^\prime~|~\secret_2,\prior^*)}\\
        &= \sum_{\data^\prime} \frac{2\Pr(\data^\prime~|~\secret_2,\prior^*) - \Pr(\data^\prime~|~\secret_2,\prior^*)^2}{1-\Pr(\data^\prime~|~\secret_2,\prior^*)}\\
        &= \sum_{\data^\prime} \Pr(\data^\prime~|~\secret_2,\prior^*)\frac{2 - \Pr(\data^\prime~|~\secret_2,\prior^*)}{1-\Pr(\data^\prime~|~\secret_2,\prior^*)}\\
        &= \sum_{\data^\prime} \Pr(\data^\prime~|~\secret_2,\prior^*)\left(1 + \frac{1}{1-\Pr(\data^\prime~|~\secret_2,\prior^*)}\right)\\
        &= 1 + \sum_{\data^\prime} \Pr(\data^\prime~|~\secret_2,\prior^*)\left(\frac{1}{1-\Pr(\data^\prime~|~\secret_2,\prior^*)}\right)\\
            &\leq 1 + \sum_{\data^\prime} \Pr(\data^\prime~|~\secret_2,\prior^*)\left(\frac{1}{1-\max_\data \max_{\secret_i}\Pr(\data~|~\secret_i)}\right)\\
            &= 1 + \frac{1}{1-\max_\data \max_{\secret_i}\Pr(\data~|~\secret_i)}
    \end{align*}
\end{proofE}

\textbf{\emph{Multiple Secrets Example.}}
\emph{{Is the problem that there were too few secrets to defend?}} A plausible attempt to counter such privacy collapse is to add more secrets so that every bit in the dataset is covered by a potential secret. However, this is not always effective: 

\begin{example}\label{ex:two}
Consider a database of $n\geq 3$ bits, $\data=\{x_1,\dots, x_n\}$. Let the secrets $\secret^{(1)}_0,\dots, \secret^{(n)}_0$ and $\secret^{(1)}_1,\dots, \secret^{(n)}_1$ be defined as $\secret^{(i)}_j=$``$i$-th bit of the data is $j$''. Let $\priorset=\{\prior^*\}$, where the prior $\prior^*$ generates an $n$-bit dataset by flipping a fair coin $n$ times.  Define the mechanism $\mech^*(x_1,\dots,x_n)$ as the following ($\xor$ is the xor operation): 
\begin{itemize}[leftmargin=*]
    \item With probability $1/2$, output the tuple containing these $n-1$ values: $\allowbreak(x_2\xor x_1, ~~~~~~~~
    x_3\xor x_1, ~~~~~~~~\dots, ~~~~~~~~x_n\xor x_1)$. We call this output type A.
    \item Otherwise, if $n$ is odd, output $x_1\xor x_2\xor\cdots\xor x_n$ (the parity of the database), and if $n$ is even, output $x_2\xor\cdots\xor x_n$ (the parity of the database after omitting the first bit). We call this output type B.
\end{itemize}
\end{example}
If an attacker sees output type A, they can xor the $n-1$ values together. If $n$ is odd, this equals  $x_2\xor\cdots \xor x_n$. If $n$ is even, this equals $x_1\xor x_2\xor\cdots \xor x_n$. If the attacker later sees output type B, they can xor it with the output type A and learn the value of $x_1$.
Once $x_1$ is revealed, the attacker can use it with output type A to determine all the other secrets. The dataset could be reconstructed by running $\mech^*$ as few as 2 times, but when run only once, $\mech^*$ still satisfies zero leakage under posterior-based privacy, summarized in Theorem~\ref{the:discomposition_2}:

\begin{theoremE}\label{the:discomposition_2}
Let $\mech^*$ be the mechanism from Example \ref{ex:two}. Given an output $\outp$ of $\mech^*$,  $\Pr(\secret^{(i)}_j~|~\mech^*(\Data)=\outp,\prior^*)=\Pr(\secret^{(i)}_j~|~\prior^*)$ for all $i=1,\dots, n$ and $j\in\{0,1\}$. Thus, $\mech^*$ has zero leakage under posterior-based privacy definitions. 
However, the expected number of runs before the input dataset is revealed is $3$.
\end{theoremE}

\begin{proofE}
    There are two types of outputs (a tuple or a single bit), and we consider each in turn.
    First, we consider the case where the output is a single number. 
    It is easy to see that for all $i$:
    \begin{align*}
        1/4 &= \Pr(\outp=1 ~|~\secret^{(i)}_2,\prior) = \Pr(\outp=1 ~|~\secret^{(i)}_1,\prior)\\
        1/4 &= \Pr(\outp=0 ~|~\secret^{(i)}_2,\prior) = \Pr(\outp=0 ~|~\secret^{(i)}_1,\prior)
    \end{align*}
    since at least one database bit is uniformly random (because there are at least 3 bits in the database and hence at least 2 in the $\xor$ operation)  and it makes the xor appear uniformly random. 

    Next, we consider the case where the output is a tuple of (2 or more) elements, which we denote as $(y_2,\dots, y_n)$.
    \begin{align*}
      \lefteqn{   \Pr(\outp = (y_2,\dots, y_n)~|~\secret^{(1)}_1,\prior) }\\
      &= \frac{1}{2} \Pr(\data=(1, 1\xor y_2, \dots, 1 \xor y_n)) = 2^{-n+1}\\
      \lefteqn{   \Pr(\outp = (y_2,\dots, y_n)~|~\secret^{(1)}_0,\prior) }\\
      &= \frac{1}{2} \Pr(\data=(0, 0\xor y_2, \dots, 0 \xor y_n)) = 2^{-n+1}\\
      \intertext{Note that by symmetry, the cases of secret pair $(\secret^{(2)}_0, \secret^{(2)}_1)$ are identical to $(\secret^{(i)}_0, \secret^{(i)}_1)$ for $i>2$}
      \lefteqn{   \Pr(\outp = (y_2,\dots, y_n)~|~\secret^{(2)}_0,\prior) }\\
      &= \frac{1}{2} \Pr(\data=\Big((y_2 \xor 0), 0, (y_3 \xor y_2 \xor 0),\dots, (y_n \xor y_2 \xor 0))\Big) \\
      & = 2^{-n+1}\\
        \lefteqn{   \Pr(\outp = (y_2,\dots, y_n)~|~\secret^{(2)}_1,\prior) }\\
      &= \frac{1}{2} \Pr(\data=\Big((y_2 \xor 1), 1, (y_3 \xor y_2 \xor 1),\dots, (y_n \xor y_2 \xor 1))\Big) \\
      & = 2^{-n+1}\\
    \end{align*}
    and so $\mech^*$ satisfies $0$-posterior-based privacy since the output probabilities are not affected
    by any secret.
    
    We next consider the expected number of runs of $\mech^*$ before we see one of each type of output (a single number and a tuple). In the first run, we get one of them. Then the distribution of the amount of additional runs of   $\mech^*$ that are needed until the second type of output is seen is a geometric distribution with expected value of 2 (each run gives us a 1/2 probability of seeing that second type of output). Hence, the expected number of runs is 1+2=3.
\end{proofE}
\textbf{\emph{Multiple Priors Example.}}
Previous examples leveraged the prior distribution to reveal secrets. \emph{Would the composition properties improve when using more priors?} 
Specifically analyzing Pufferfish privacy, we again show that this is not the case. 

When $\priorset$ is the set of all product distributions, it is known that the resulting instantiation of Pufferfish will turn into $\epsilon$-differential privacy \cite{pufferfish}. However, this is typically not how posterior-based privacy definitions would be used. The ideal situation would be to specify a reference prior $\prior^*$ that is a good initial guess of the true distribution, and then to add to $\priorset$ many ``nearby'' priors to hedge against inaccuracies in $\prior^*$. In this way, the collection $\priorset$ of priors would encode approximate knowledge, such as ``the conditional probability that a cancer patient is female is between $0.35$ and $0.65$'', ``the conditional probability that a cancer-free patient is female is between $0.4$ and $0.6$'', etc.

Unfortunately, this kind of $\priorset$ causes problems with composition. We show a mechanism that has a finite Pufferfish privacy parameter when run once, but can reveal the secret when run twice.
\begin{example}\label{ex:three}
We simplify the discussion by considering one secret pair $(\secret_1,\neg \secret_1)$. The set $\priorset$ contains many priors, but in line with the above discussion, has the following structure: there exist two sets of databases $\universe_{T}$ and $\universe_F$ with the following properties:
\begin{enumerate}[leftmargin=*]
   \item Secret $\secret_1$ is true for all $\data\in \universe_T$ and $\secret_1$ is false for all $\data\in \universe_F$. 
    \item $\Pr(\data\in \universe_T~|~ \secret_1,\prior)\in [L_T, U_T]$ for all $\prior\in\priorset$ ($L_T>0$, $U_T<1$).
    \item $\Pr(\data\in \universe_F~|~ \neg\secret_1,\prior)\in [L_F, U_F]$ for all $\prior\in\priorset$ ($L_F>0$, $U_F<1$).
\end{enumerate}
Define $\mech^*$ that, on input $\data$, does the following:
\begin{itemize}[leftmargin=*]
    \item \textbf{Case 1.} If $\secret_1(\data)$ is true: (\textbf{1a}) if $\data\in \universe_T$,  return $(``1a", ``2a")$ with probability $0.5$ and $(``1a", ``2b")$ otherwise; (\textbf{1b}) if $\data\notin \universe_T$, return $(``1b", ``2a")$ with probability $0.5$ and $(``1b", ``2b")$ otherwise.
    \item \textbf{Case 2.} If $\secret_1(\data)$ is false: (\textbf{2a}) if $\data\in \universe_F$, return $(``1a", ``2a")$ with probability $0.5$ and $(``1b", ``2a")$ otherwise; (\textbf{2b}) if $\data\notin \universe_F$, return $(``1a", ``2b")$ with probability $0.5$ and $(``1b", ``2b")$ otherwise.
\end{itemize}
\end{example}
$\mech^*$ from Example \ref{ex:three} has 4 total cases (1a, 1b, 2a, 2b), and each case returns a tuple of case numbers, one of which is the actual case number. 
When $\mech^*$ is run multiple times, as soon as two different outputs are observed, the true case, hence the secret, is revealed (the tuple element that didn't change is the real case number).
The following theorem shows that $\mech^*$ has a finite privacy parameter under Pufferfish, yet the expected number of runs before the secret is revealed is 3, meaning that composition breaks down quickly.

\begin{theoremE}\label{the:multi-prior-puff}
    Let $\allowbreak\epsilon=\max(\log|\frac{U_T}{L_F}|,\log|\frac{L_T}{U_F}|,\log|\frac{1-U_T}{U_F}|, \log|\frac{1-L_T}{L_{F}}|,\\\log|\frac{L_T}{1-L_F}|, \log|\frac{U_T}{1-U_{F}}|, \log|\frac{1-U_T}{1-L_F}|, \log|\frac{1-L_T}{1-U_{F}}|)$. Mechanism $\mech^*$ from Example \ref{ex:three} satisfies $\epsilon$-Pufferfish privacy. The expected number of runs of $\mech^*$ to reveal the true secret is 3. 
\end{theoremE}

\begin{proofE}
    For all four possible outputs from the mechanism $\mech$, the information leakage is bounded by $\epsilon$ in the following ways:
\begin{align*}
\lefteqn{    \frac{\Pr(\text{output is } = (``1a", ``2a") ~|~ \secret_1,\prior)}{\Pr(\text{output is } = (``1a", ``2a") ~|~ \neg\secret_1,\prior)}}\\
&= \frac{\sum_{\data\in S_{T}} \Pr(\data~|~\secret_1,\prior) \frac{1}{2}}{\sum_{\data\in S_{F}} \Pr(\data~|~\neg\secret_1,\prior) \frac{1}{2}}\in [\frac{L_T}{U_F}, \frac{U_T}{L_{F}}] 
\end{align*}
\begin{align*}
\lefteqn{    \frac{\Pr(\text{output is } = (``1b", ``2a") ~|~ \secret_1,\prior)}{\Pr(\text{output is } = (``1b", ``2a") ~|~ \neg\secret_1,\prior)}}\\
&= \frac{\sum_{\data\notin S_{T}} \Pr(\data~|~\secret_1,\prior) \frac{1}{2}}{\sum_{\data\in S_{F}} \Pr(\data~|~\neg\secret_1,\prior) \frac{1}{2}}\in [\frac{1-U_T}{U_F}, \frac{1-L_T}{L_{F}}]
\end{align*}
\begin{align*}
\lefteqn{    \frac{\Pr(\text{output is } = (``1a", ``2b") ~|~ \secret_1,\prior)}{\Pr(\text{output is } = (``1a", ``2b") ~|~ \neg\secret_1,\prior)}}\\
&= \frac{\sum_{\data\in S_{T}} \Pr(\data~|~\secret_1,\prior) \frac{1}{2}}{\sum_{\data\notin S_{F}} \Pr(\data~|~\neg\secret_1,\prior) \frac{1}{2}}\in [\frac{L_T}{1-L_F}, \frac{U_T}{1-U_{F}}] 
\end{align*}
\begin{align*}
\lefteqn{    \frac{\Pr(\text{output is } = (``1b", ``2b") ~|~ \secret_1,\prior)}{\Pr(\text{output is } = (``1b", ``2b") ~|~ \neg\secret_1,\prior)}}\\
&= \frac{\sum_{\data\notin S_{T}} \Pr(\data~|~\secret_1,\prior) \frac{1}{2}}{\sum_{\data\notin S_{F}} \Pr(\data~|~\secret_2,\prior) \frac{1}{2}}\in [\frac{1-U_T}{1-L_F}, \frac{1-L_T}{1-U_{F}}] 
\end{align*}

We next show that the expected number of runs of $\mech^*$ to reveal the true secret is 3. Given a dataset $\data$, $\mech$ can only produce two types of output, each with probability 0.5. Once the mechanism $\mech$ provides different outputs from multiple runs, we are able to infer whether the true secret is $\secret_1$ or $\neg\secret_1$. On the first run, we observe one type. After that, we keep running until we observe the other type. Each additional run has a probability of 0.5 of giving that other type, so the number of expected extra runs is a geometric distribution with a mean of 2. Therefore, the total expected runs = 1 + 2 = 3.
\end{proofE}
Again, when all possible priors are added in $\priorset$, Pufferfish becomes $\epsilon$-differential privacy, and hence composes linearly.
However, the above example implies that such a composition would only happen when enough priors are added, such that for any set $S$ of databases with the same secret,
$\priorset$ must include priors where $S$ has arbitrarily low probabilities and also include priors where $S$ has arbitrarily high probabilities.
While such a case may improve composition, it will ruin utility as such a comprehensive set of priors will include priors that the data curator views as unrealistic.

In the next section, we study necessary conditions for Pufferfish privacy to compose, and show that it needs differential-privacy-like constraints on mechanisms. 
Greatly expanding the set of priors will be an overkill, i.e.,  expanding the set of priors until composition is achieved would be equivalent to adding the necessary conditions that we found as well as many others (unnecessarily).

\section{Necessary Conditions for Composition and Post-processing Invariance}\label{sec:conditions}
The previous section demonstrated that posterior-based privacy definitions, including Pufferfish, can
suffer privacy collapse.
In this
section, we study how to avoid this situation. Specifically, suppose we have a collection of mechanisms
$\mech_1,\dots, \mech_k$ where each $\mech_i$ satisfies Pufferfish with privacy parameter $\epsPuffer_i$.
What conditions do these mechanisms need to satisfy in order to linearly compose with each other, so that
the combined release of all of their outputs satisfies Pufferfish with privacy parameter $\leq \sum_{i=1}^k\epsPuffer_i$?

The necessary conditions that we prove (Theorem \ref{thm:nfc}) identify a class of constraints that we call the NfC (\underline{n}ecessary-\underline{f}or-\underline{c}omposition) constraints. For each dataset $\data$, one or more NfC constraints need to be added (in addition to the Pufferfish constraints); otherwise, linear composition will fail. As we will show, the differential privacy constraints, $\log \Pr(\mech(\data)=\outp)\leq \epsilon + \log \Pr(\mech(\data^\prime)=\outp)$, are special cases of NfC constraints, which have the general form of: $\log \Pr(\mech(\data)=\outp)\leq \epsilon + \sum_{i=1}^m\beta_i\log \Pr(\mech(\data_i)=\outp)$.

Then, we study another important property of privacy definitions called \emph{post-processing invariance}. Post-processing invariance means that if $\mech_i$ satisfies $\epsPuffer_i$-Pufferfish and $f_i$ is a possibly randomized algorithm, then the algorithm $f_i\circ\mech_i$, which releases $f_i(\mech_i(\data))$, still satisfies Pufferfish with privacy parameter $\epsPuffer_i$. Now that extra NfC constraints must be added, one needs to ensure that those constraints are also post-processing invariant. That is, if $\mech$ satisfies the extra constraints, then $f_i\circ\mech_i$ should also satisfy the constraints. Or, put another way, if $\mech_1,\dots,\mech_k$ compose linearly, then so should $f_1\circ\mech_1, \dots, f_k\circ\mech_k$. 

Our second result (Theorem \ref{thm:post-process-gen}) states that a non-redundant subset of  NfC constraints is post-processing invariant if and only if every constraint in that subset is a differential-privacy-style constraint.
Together, these results imply that if one wants to design Pufferfish mechanisms that compose with each other, those mechanisms should also satisfy a form of differential privacy. In Section \ref{sec:puffer-mech}, we show how to design such algorithms for tabular datasets (i.e., sufficient conditions for composition).

\subsection{Necessary Condition for Composition}\label{sec:nfc}
We present necessary conditions for a set of Pufferfish mechanisms to compose, when defending a set of pairs of potential secrets $\secretset$ against priors in  $\priorset$. The conditions use the concept of a \emph{convex combination vector} $\beta$, which is a vector whose entries are nonnegative and sum up to 1. Entry $\ell$ in the vector is represented as $\beta_\ell$.

One condition in the following theorem is that the set of possible databases is finite (but arbitrarily large). In practice, this is not a restrictive condition because attribute domains are typically bounded (e.g., a 500-character limit for URLs, 64 bits for numeric values) and only finitely many records can be collected in a finite time (e.g., no database will have more than $2^{64}$ records). This simplifies the proof as it avoids dealing with corner cases that only arise with infinite-dimensional vectors.
Similarly, the necessary conditions consider a set of mechanisms $\mech_1,\dots, \mech_m$, whose ranges are finite. Mechanisms with infinite output spaces would also need to satisfy these and possibly additional conditions.

\begin{theoremE}[Necessary Condition for Composition (NfC)] \label{thm:nfc}
    Let $\priorset$ be a set of priors over a finite (but arbitrarily large) set of possible datasets.
    Let  $\secretset$ be a set of secret pairs, and let $0 \le epsilon<\infty$ be the Pufferfish privacy parameter. Let $\mech_1,\dots, \mech_m$ be a set of mechanisms, with finite ranges, that satisfy $\epsilon$-Pufferfish and linearly compose with each other. Then for all ordered pairs $(\secret_1,\secret_2)\in S$ and datasets $\data$ for which $\secret_1$ is true, there exists a convex combination vector $\beta$ over datasets $\data_\ell$ for which $\secret_2$ is true, such that:
    \begin{align}
        \log \Pr(\mech(\data)=\outp) \leq \epsilon + \hspace{-1em}\sum_{\data_\ell: \secret_2(\data_\ell)\text{ is true}}\beta_\ell\log \Pr(\mech(\data_\ell)=\outp),\label{eq:thenfcconstraints}
    \end{align}
    for each mechanism and each of its possible outputs $\outp$. 
    Note the convex combination depends on $\secret_1, \secret_2$ and $\data$, and the reverse secret pair $(\secret_2,\secret_1)$ may have its own convex combination.
\end{theoremE}
\begin{proofE}
    Let $\universe=\{\data_1,\data_2,\dots, \data_\numdata\}$ be a finite collection of datasets.
Given an instantiation of Pufferfish with the set of priors $\priorset$ and set of secret pairs $\secretset$, let $\mechset_\epsilon=\{\mech_1,\dots, \mech_m\}$ be a set of mechanisms with a finite output space that satisfy $\epsilon$-Pufferfish and compose linearly together. 
Every output $\outp$ of a mechanism $\mech$ has a corresponding $\numdata$-dimensional likelihood vector $\LL(\outp, \mech)$ defined as:
$$\LL(\outp, \mech)=[\Pr(\mech(\data_1)=\outp),~ \dots, ~\Pr(\mech(\data_\numdata)=\outp)].$$
Let $\lvset_\epsilon$ be the collection of all likelihood vectors of mechanisms in $\mechset_\epsilon$. That is,
$$\lvset_\epsilon = \{\LL(\outp,\mech)~|~\mech\in\mechset_\epsilon \text{ and }\outp\in\range(\mech)\}.$$
Now, pick $k$ vectors $\lv_1,\dots, \lv_k$ from $\lvset_\epsilon$ (e.g., all of them), so that each $\lv_i$ corresponds to an $\LL(\outp,\mech)$ of some $\mech\in\mechset_\epsilon$ and $\outp\in\range(\mech)$.
\begin{figure}
    \centering
    \includegraphics[width=0.5\linewidth]{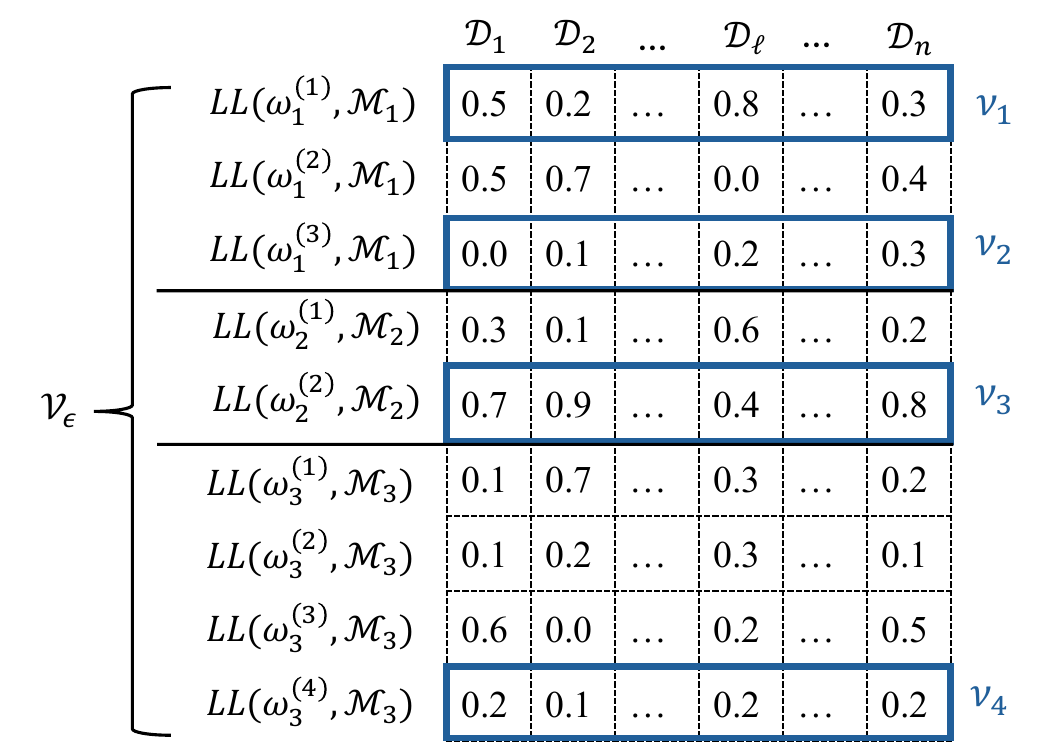}
    \caption{Example illustration with $\mechset_\epsilon=\mech_1, \mech_2, \mech_3$, with each mechanism having 3, 2, and 4 unique outputs. In this illustration, $k=4$ vectors are picked.}
    \label{fig:sec4notation}
\end{figure}

Figure~\ref{fig:sec4notation} illustrates an example of $\lvset_\epsilon$ and $\lv_i$ vectors. In this example, $\lvset_\epsilon$ consists of three mechanisms, $\mech_1, \mech_2, \mech_3$, and each mechanism has three ($\outp^{(1)}_1$, $\outp^{(2)}_1$, $\outp^{(3)}_1$), two ($\outp^{(1)}_2$, $\outp^{(2)}_2$), and four ($\outp^{(1)}_3$,\dots, $\outp^{(4)}_3$) possible outputs, respectively.
The chosen $k=4$ vectors are $\lv_1=\LL(\outp^{(1)}_1, \mech_1)$, $\lv_2=\LL(\outp^{(3)}_1, \mech_1)$, $\lv_3=\LL(\outp^{(2)}_2, \mech_2)$, and $\lv_4=\LL(\outp^{(4)}_3, \mech_3)$.

Let $\lv_i[\ell]$ denote the $\ell$-th entry (column) of $\lv_i$, e.g., $\lv_1[2]=0.2$, $\lv_3[1]=0.7$, etc.
For each $\lv_i=\LL(\outp,\mech)$, suppose the mechanism $\mech$ was applied to observe the output $\outp$ for $c_i$ times, where $c_i$ is a nonnegative integer.
By linear composition, we must have:
\begin{align*}
    \sum_{i=1}^k c_i \epsilon 
    &\geq \log \frac{
        \sum_{\ell=1}^\numdata \Pr_\prior(\data_\ell ~|~ \secret_1)\prod\limits_{i=1}^k \lv_{i}[\ell]^{c_i}
    }{
        \sum_{\ell=1}^\numdata \Pr_\prior(\data_\ell ~| \secret_2)\prod\limits_{i=1}^k \lv_{i}[\ell]^{c_i}
    },
\end{align*}
for each $\prior\in\priorset$ and secret pair $(\secret_1,\secret_2)\in \secretset$.
For any positive integer $r$, by multiplying each $c_i$ by $r$ gives:
\begin{align*}
    \epsilon 
    &\geq \log \frac{
      \left(  \sum_{\ell=1}^\numdata \Pr_\prior(\data_\ell ~|~ \secret_1)\prod\limits_{i=1}^k \lv_{i}[\ell]^{rc_i} \right)^{1/\sum_j rc_j}
    }{
       \left( \sum_{\ell=1}^\numdata \Pr_\prior(\data_\ell ~| \secret_2)\prod\limits_{i=1}^k \lv_{i}[\ell]^{rc_i}\right)^{1/\sum_j rc_j}
    }\\
    &=\log\frac{
    \left(\sum_{\ell=1}^\numdata \Pr_\prior(\data_\ell ~|~ \secret_1)\prod\limits_{i=1}^k\left(\lv_{i}[\ell]^{c_i/\sum_j c_j}\right)^{\sum_j rc_j}\right)^{1/\sum_j rc_j}
    }{
    \left(\sum_{\ell=1}^\numdata \Pr_\prior(\data_\ell ~|~ \secret_2)\prod\limits_{i=1}^k\left(\lv_{i}[\ell]^{c_i/\sum_j c_j}\right)^{\sum_j rc_j}\right)^{1/\sum_j rc_j}
    }.\\
\end{align*}
The numerator and the denominator are both a weighted $L_p$ norm with $p=r\sum_j c_j$ and weights $\Pr_\prior(\data_\ell ~|~ \secret)\in [0, 1]$. As $p\rightarrow\infty$, both converge to the max among the items 
that have nonnegative weight. Thus, letting $r\rightarrow\infty$, we have:
\begin{align*}
    \epsilon 
    &\geq\log\frac{
    \left(\max\limits_{\{\ell: \data_\ell\text{ has }\secret_1\}}\prod\limits_{i=1}^k\left(\lv_{i}[\ell]^{c_i/\sum_j c_j}\right)\right)
    }{
    \left(\max\limits_{\{\ell: \data_\ell\text{ has }\secret_2\}}\prod\limits_{i=1}^k\left(\lv_{i}[\ell]^{c_i/\sum_j c_j}\right)\right)
    }.\\
\end{align*}
Next, we make the observation that any collection of nonnegative rational numbers $q_1,\dots, q_k$ that add up to 1 can be represented as $\frac{c_1}{\sum_j c_j}, \dots, \frac{c_k}{\sum_j c_j}$ for an appropriate choice of $c_1,\dots, c_k$. Since the rationals are dense in the real numbers, and due to continuity, for any $\alpha_1,\dots, \alpha_k$ that are nonnegative and add up to 1, we have:
\begin{align}
    \epsilon 
    &\geq\log\frac{
    \left(\max\limits_{\{\ell: \data_\ell\text{ has }\secret_1\}}\prod\limits_{i=1}^k\left(\lv_{i}[\ell]^{\alpha_i}\right)\right)
    }{
    \left(\max\limits_{\{\ell: \data_\ell\text{ has }\secret_2\}}\prod\limits_{i=1}^k\left(\lv_{i}[\ell]^{\alpha_i}\right)\right)
    }\nonumber\\
    &=
    \left(\max\limits_{\{\ell: \data_\ell\text{ has }\secret_1\}}\sum\limits_{i=1}^k {\alpha_i}\log(\lv_{i}[\ell])\right)
    -
    \left(\max\limits_{\{\ell: \data_\ell\text{ has }\secret_2\}}\sum\limits_{i=1}^k{\alpha_i}\log(\lv_{i}[\ell])\right).
    \label{eq:comp_necessary}
\end{align}
which is the necessary condition for linear composition to hold.
Note that the condition should hold 
for all secret pairs $(\secret_1,\secret_2)\in\secretset$ (and a corresponding  equation for $(\secret_2,\secret_1)$), for all integers $k>0$, for all $\lv_1,\dots, \lv_k\in \lvset_\epsilon$, and for all convex combination coefficients $\alpha_1,\dots, \alpha_k$.

\vspace{1em}
\textbf{The Necessary Condition as a Collection of Linear Programs}\\
\vspace{1em}

To better understand the necessary condition in Equation~\ref{eq:comp_necessary}, we turn it into a collection of linear programs. As Equation~\ref{eq:comp_necessary} cannot directly be expressed as a linear program, we find a collection of linear programs such that Equation \ref{eq:comp_necessary} is satisfied if and only if all the linear programs have an optimal value $\leq \epsilon$.

Given choices of a secret pair $(\secret_1,\secret_2)\in\secretset$, an integer $k>0$, and $k$ vectors $\lv_1,\dots, \lv_k\in \lvset_\epsilon$, let $\Xi_1 = \{\ell~:~ \secret(\data_\ell)= \secret_1\}$ be the indexes for datasets whose secret is $\secret_1$ and $\Xi_2$ is the set of datasets whose secret is $\secret_2$.
First, we observe that Equation \ref{eq:comp_necessary} can be turned into $|\Xi_1|$ equations:
\begin{align*}
    &\text{ for all }\ell^*\in \Xi_1\\
    &\text{ for all convex combination coefficients $\alpha_1,\dots,\alpha_k$}\\
    \epsilon 
    &\geq
    \left(\sum\limits_{i=1}^k {\alpha_i}\log(\lv_{i}[\ell^*])\right)
    -
    \left(\max\limits_{\ell\in\Xi_2}\sum\limits_{i=1}^k{\alpha_i}\log(\lv_{i}[\ell])\right)\\
    &= \min_{\ell\in\xi_2} \sum\limits_{i=1}^k{\alpha_i} \left(\log(\lv_{i}[\ell^*])-\log(\lv_{i}[\ell])\right).
\end{align*}
Clearly, Equation \ref{eq:comp_necessary} is true only if all of the above equations are true.
We convert each of the above equations into a separate linear program by introducing a new optimization variable, $\epsilon_0$, which measures the size of the right-hand side following the $\geq$, and treating $\alpha_1,\dots, \alpha_k$ as linear program variables.
The resulting set of linear optimization problems is:
\begin{align*}
    & \text{ for each }\ell^*\in\Xi_1:\\
    \max_{\epsilon_0, \alpha_1,\dots, \alpha_k} &\epsilon_0\\
    \text{s.t. }& \alpha_1 \geq 0, \dots, \alpha_k \geq 0\\
                & \alpha_1 + \cdots + \alpha_k = 1\\
                & \left(\sum\limits_{i=1}^k {\alpha_i}\log(\lv_{i}[\ell^*]/\lv_{i}[\ell])\right) - \epsilon_0 \geq 0\text{ for all }\ell\in\Xi_2.
\end{align*}
Equation \ref{eq:comp_necessary} is satisfied if and only if each of these linear programs has a solution $\leq \epsilon$.

\vspace{1em}
\textbf{Finding the Dual Linear Programs}\\
\vspace{1em}

The dual linear programs \cite{borwein2006convex} of the above linear programs are:
\begin{align*}
        & \text{ for each }\ell^*\in\Xi_1:\\
    \min_{x, ~\beta_\ell \text{ for }\ell\in\Xi_2} \quad & x \\
    \text{s.t. } \quad& \beta_\ell \leq 0 \text{ for }\ell\in\Xi_2\\
                 & -\sum_{\ell\in \Xi_2}\beta_\ell =1\\
                 & x + \sum_{\ell\in \Xi_2}\beta_\ell\log(\lv_{i}[\ell^*]/\lv_{i}[\ell])\geq 0 \text{ for }i=1,\dots, k,
\end{align*}
which can be simplified to:
\begin{align*}
        & \text{ for each }\ell^*\in\Xi_1:\\
    \min_{x, ~\beta_\ell \text{ for }\ell\in\Xi_2} \quad & x \\
    \text{s.t. } \quad & \beta_\ell \geq 0 \text{ for }\ell\in\Xi_2\\
                 & \sum_{\ell\in \Xi_2}\beta_\ell =1\\
                 & x - \log(\lv_{i}[\ell^*]) + \sum_{\ell\in \Xi_2}\beta_\ell\log(\lv_{i}[\ell])\geq 0 \text{ for }i=1,\dots, k. 
\end{align*}
Strong duality theorem \cite{borwein2006convex} shows that the optimal value of the dual program equals the optimal value of the original linear programs (since the original programs always have a solution).
Hence, the sufficient condition in Equation \ref{eq:comp_necessary} is satisfied if and only if the optimal solution to all of these dual problems is $\leq \epsilon$.
In order to ensure that each dual program has a solution $\leq \epsilon$, we need to show that
for each $j\in\Xi_1$ there is a collection of coefficients $\beta^{(j)}_\ell$ for $\ell\in \Xi_2$ such that: 
\begin{itemize}
\item the $\beta^{(j)}$ are nonnegative and $\sum_{\ell\in\Xi_2} \beta^{(j)}_\ell=1$ (i.e., $\beta^{(j)}_\ell$ for $\ell\in\Xi_2$ are convex combination coefficients). 
\item $\epsilon \geq  \log(\lv_{i}[\ell^*]) - \sum_{\ell\in \Xi_2}\beta^{(j)}_\ell\log(\lv_{i}[\ell])$ for $i=1,\dots, k$.
\end{itemize}
Rewriting these with the language of mechanisms gives Theorem~\ref{thm:nfc}.
\end{proofE}

We use the term \textbf{NfC constraints} to refer to all inequalities having the form shown in Equation \ref{eq:thenfcconstraints}. Before presenting the proof sketch of Theorem \ref{thm:nfc},  note that Equation \ref{eq:thenfcconstraints} can also be written as:
\begin{align*}
    \Pr(\mech(\data)=\outp) \leq e^\epsilon \prod\limits_{\data_\ell: \secret_2(\data_\ell)\text{ is true}} \Pr(\mech(\data_\ell)=\outp)^{\beta_\ell}.
\end{align*}
This makes the connection to differential privacy clearer: if $\beta_\ell=1$ for some index $\ell$ and 0 for the rest, we get $\Pr(\mech(\data)=\outp) \leq e^\epsilon  \Pr(\mech(\data_\ell)=\outp)$, where $\secret_1$ is true of $\data$ and $\secret_2$ is true of $\data_\ell$. Hence, differential privacy inequalities are a subset of NfC constraints.

\begin{proof}[proof sketch (see \shortlong{\cite{nfcarxiv}}{the appendix} for the full proof).]~
   The overall idea is that we convert the statement ``linear composition holds'' into a collection of linear programs and take the duals of the linear programs.
    %
    Since the number of possible outputs is finite, we can write them as $\outp_1,\dots, \outp_k$. Without loss of generality, we can assume that the ranges of the mechanisms are disjoint, i.e., given an output $\outp_i$, we can tell which mechanism it came from. Therefore, we can use the shorthand $P(\outp_i~|~\data_\ell)$ to mean  $P(\mech_j(\data_\ell)=\outp_i)$, where $\mech_j$ is the mechanism whose range contains $\outp_i$.
    
    The data curator selects $r$ mechanisms with replacement from $\{\mech_1,\dots, \mech_m\}$ (mechanisms can be repeated), runs them, and records their outputs. Suppose that the fraction of times that output $\outp_i$ occurs is $\alpha_i$. Under linear composition, the Pufferfish privacy cost would be bounded by $r\epsilon$, and for every $\prior\in\priorset$ and secret pair $(\secret_1,\secret_2)$, we get the equation:
    \begin{align*}
        r\epsilon &\geq  \log \frac{
      \left(  \sum_{\ell} \Pr_\prior(\data_\ell ~|~ \secret_1)\prod\limits_{i=1}^k \Pr(\outp_i~|~\data_\ell)^{r\alpha_i} \right)
    }{
       \left( \sum_{\ell} \Pr_\prior(\data_\ell ~|~ \secret_2)\prod\limits_{i=1}^k \Pr(\outp_i~|~\data_\ell)^{r\alpha_i}\right).
    }
    \end{align*}
    If you divide by $r$ and let $r\rightarrow\infty$, this becomes:
    \begin{align*}
   \epsilon &\geq  \left(\max\limits_{\{\ell: \secret_1(\data_\ell)=true\}}\sum\limits_{i=1}^k {\alpha_i}\log(\Pr(\outp_i~|~\data_\ell))\right)\\
   &\phantom{\geq}  -
    \left(\max\limits_{\{\ell: \secret_2(\data_\ell)=true\}}\sum\limits_{i=1}^k{\alpha_i}\log(\Pr(\outp_i~|~\data_\ell))\right).
    \end{align*}
    Such an equation must hold for any $(\secret_1,\secret_2)\in\secretset$ and any choice of nonnegative $\alpha_1,\dots, \alpha_k$ that add up to 1.
    Then, for every secret pair $(\secret_1,\secret_2)$ and each $\data_{\ell^*}$ for which $\secret_1$ is true, we create a linear program as follows. We show that the above constraints are true if and only if all of the following linear programs have a solution $\leq \epsilon$. 
    \begin{align*}
    &\hspace{-3em}\text{For each $(\secret_1,\secret_2)\in\secretset$ and $\data_{\ell*}$ for which $\secret_1(\data_{\ell^*})$=true:}\\
    \max_{\epsilon_0, \alpha_1,\dots, \alpha_k} &\epsilon_0\\
    \text{s.t. } &\alpha_1 \geq 0, \dots, \alpha_k \geq 0\\
                 &\alpha_1 + \cdots + \alpha_k = 1\\
                 &\text{For all }\data_\ell \text{ having }\secret_2(\data_\ell)=true:\\
                 &\left(\sum\limits_{i=1}^k {\alpha_i}\log\left(\frac{\Pr(\outp_i~|~\data_{\ell^*})}{\Pr(\outp_i~|~\data_\ell)}\right)\right) - \epsilon_0 \geq 0.
   \end{align*}
    We convert each such linear program into the corresponding dual program for each $(\secret_1,\secret_2)\in\secretset$ and $\data_{\ell*}$ for which $\secret_1(\data_{\ell^*})$=true:
    \begin{align*}
    &\min_{\epsilon^\prime_0, ~ \text{ convex combination vector } \beta}  \quad  \epsilon^\prime_0 \\
    &\text{s.t. }  \epsilon^\prime_0 - \log(\Pr(\outp_i~|~\data_{\ell^*})) + \sum_{\ell}\beta_\ell\log(\Pr(\outp_i~|~\data_{\ell}))\geq 0 \text{ for all }\outp_i, 
   \end{align*}
   where the summation is over all $\ell$ for which $\secret_2(\data_\ell)=true$. We conclude that, for composition,
   all of these dual programs must have a solution $\leq \epsilon$. If that is true, each of these programs will come up with a linear combination vector $\beta$, which becomes Equation \ref{eq:thenfcconstraints}.

\end{proof}

\subsection{Enforcing Post-processing Invariance}
Theorem~\ref{thm:nfc} tells us that if we want composition properties inside Pufferfish, we need to add additional NfC constraints (inequalities with the form of Equation \ref{eq:thenfcconstraints}). However, such constraints are not necessarily post-processing invariant. That is, after post-processing, a mechanism may fail to satisfy the constraints, and the post-processed mechanisms may not compose. We next show that if a set of NfC constraints is ``non-redundant'' and post-processing invariant, then each one of those constraints must have the form of: $\log \Pr(\mech(\data_i)=\outp)\leq \epsilon + \log \Pr(\mech(\data_j)=\outp)$, i.e., each one of those constraints must be a differential privacy-style constraint. 

To understand redundancy, consider the following two constraints with two different convex combination vectors $\beta^{(0)}$, $\beta^{(1)}$:
\begin{align*}
 \log \Pr(\mech(\data^*)=\outp) &\leq \epsilon + \sum_{\data_\ell: \secret_2(\data_\ell)\text{ is true}}\beta^{(0)}_\ell\log \Pr(\mech(\data_\ell)=\outp),\\
  \log \Pr(\mech(\data^*)=\outp) &\leq \epsilon + \sum_{\data_\ell: \secret_2(\data_\ell)\text{ is true}}\beta^{(1)}_\ell\log \Pr(\mech(\data_\ell)=\outp).
\end{align*}
Then, for any $c\in(0,1)$, one can define $\beta^{(c)}=c\beta^{(1)} + (1-c)\beta^{(0)}$ to get the following redundant constraint: 
\begin{align*}
      \log \Pr(\mech(\data^*)=\outp) &\leq \epsilon + \sum_{\data_\ell: \secret_2(\data_\ell)\text{ is true}}\beta^{(c)}_\ell\log \Pr(\mech(\data_\ell)=\outp).
\end{align*}
In general, a constraint is redundant if it is implied by the other constraints that we have. By iteratively removing redundant constraints,
we obtain a non-redundant set of constraints.

\begin{theoremE}[NfC and Post-processing Invariance]
\label{thm:post-process-gen}
    Suppose there are finitely many datasets and $m$ non-redundant NfC constraints, where constraint $i$ involves some dataset $\data_i$, some secret pair $(\secret_1^{(i)}, \secret_2^{(i)})$, and a convex combination vector $\beta^{(i)}$, and has the form: 
    $$\log \Pr(\mech(\data_i)=\outp) \leq \epsilon + \sum_{\data_\ell: \secret^{(i)}_2(\data_\ell)\text{ is true}}\beta^{(i)}_\ell\log \Pr(\mech(\data_\ell)=\outp),$$
    for all $\outp$.
    If all of the convex combination vectors have only 1 nonzero entry (i.e., these are all DP-style conditions), this set of constraints is post-processing invariant. If at least one of the convex combination vectors has 2 or more nonzero entries, then this set of constraints is not post-processing invariant.
\end{theoremE}
\begin{proofE}
    Clearly, if each $\beta$ is a one-hot encoding (one entry contains a 1 and all others are 0), the resulting constraints are postprocessing invariant. 

    Next, let $m$ be the number of constraints. Let $\beta^{(i)}$ denote the convex combination used in the $i^{\text{th}}$ constraint and $\data_{t_i}$ be the corresponding dataset used on the left hand side (i.e., $t_i$ is the index of the dataset among $\{\data_1,\dots,\data_n\}$) and $(\secret_1^{(i)}, \secret_2^{(i)})$ be the associated secret pair. Note that each convex combination $\beta^{(i)}$ can be turned into a vector $w^{(i)}$ that has a component for every possible dataset and 
    \begin{align*}
        \log \Pr(\mech(\data_{t_i})=\outp) - \sum\limits_{\data_\ell: \secret_2^{(i)}(\data_\ell)\text{ is true}}\beta^{(i)}_\ell\log \Pr(\mech(\data_\ell)=\outp)\\
       =  \sum_{j=1}^n w^{(i)}_j\log\Pr(\mech(\data_j)=\outp)
    \end{align*}
    Furthermore, it is clear that for any $c$, 
    \begin{align*}
    \sum_{j=1}^n c w^{(i)}_j = 0\quad
    \text{since }\quad c - \sum_{\data_\ell~: ~\secret_2(\data_\ell)\text{ is true}} c\beta^{(i)}_\ell = 0.
    \end{align*}

    Next, define two sets of vectors:
    \begin{align*}
        \Omega_1 &=\{v~:~ v\cdot w^{(i)}\leq \epsilon \text{ for }i=1,\dots, m\}\\
        \Omega_2 &= \{v~: \text{each component of $v$ is $\leq 0$} \}
    \end{align*}
    Then clearly a mechanism $\mech$ satisfies those $m$ constraints if and only if, for every output $\outp$, the $n$-dimensional vector $$[\log(\Pr(\mech(\data_1))=\outp), \dots, \log(\Pr(\mech(\data_n))=\outp)]\in \Omega_1\cap\Omega_2$$.

    Next, without loss of generality, let $\beta^{(1)}$ be the convex combination that has at least 2 nonzero components. Our proof strategy takes the following steps:
    \begin{itemize}
        \item First we show there exist two linearly independent vectors $v^\prime \in \Omega_1\cap\Omega_2$ and $v^*\in\Omega_1\cap\Omega_2$  such that $v^\prime\cdot w^{(1)}=v^* \cdot w^{(1)}=\epsilon$.
        \item Then we show that there exists a mechanism $\widehat{\mech}$ whose possible outputs are 0,1,2,3, that satisfies necessary constraints for composition, $\Pr(\widehat{\mech}(\data_i)=0)=e^{v^\prime_i}$ (for $i=1,\dots, n$) and $\Pr(\widehat{\mech}(\data_i)=1)=e^{v^*_i}$ (for $i=1,\dots, n$).
        \item There exists a randomized algorithm $\randalg$ such that the postprocessed algorithm $\widehat{\mech}^*(\data)\equiv\randalg(\widehat{\mech}(\data))$ does not satisfy the necessary constraints for composition.
    \end{itemize}

\noindent\textbf{Step 1:}
Note that the constant vectors, those whose entries are $[c,\dots, c]$ for $c < 0$,  satisfy all the necessary-for-composition constraints with strict inequality (i.e., $<$) and belong to the interior of $\Omega_1\cap\Omega_2$.
Thus, there is a point $v^{(b)}\in\Omega_2$  for which $v^{(b)} \cdot w{(i)} < \epsilon$ for $i=1,\dots, m$.

Next, we note that the constraint involving $\beta^{(1)}$ and hence $w^{(1)}$ is not redundant. Therefore, there exists a vector $v^{(a)}\in\Omega_2$ for which $v^{(a)}\cdot w^{(1)} > \epsilon$ but $v^{(a)} \cdot w{(i)} \leq \epsilon$ for $i=2,\dots, m$ and without loss of generality, we may assume that $v^{(a)} \cdot w{(i)} < \epsilon$ for $i=2,\dots, m$
by continuity of the dot product (e.g., by moving the vector slightly in the direction of a point that is in the interior of $\Omega_1 \cap\Omega_2$).

Thus, by convexity of the halfspaces and compactness in any bounded subset, the line from $v^{(a)}$ to $v^{(b)}$  contains some point $v^{(\dagger)}$ for which $v^{(\dagger)}\cdot w^{(1)} = \epsilon$ and $v^{(\dagger)} \cdot w{(i)} < \epsilon$ for $i=2,\dots, m$. Furthermore, due to compactness, one can create an appropriate open set around this line which allows us to conclude that $\{v\in\Omega_2:~|~v\cdot w^{(1)}=\epsilon\}\cap \left(\bigcap_{i=2}^m \{v\in\Omega_2~:~v\cdot w^{(i)}\}<\epsilon\right)$
has a nonempty relative interior. We note that this relative interior has dimensionality $n-1$ since the ambient space has dimensionality $n$. Furthermore, since $\beta^{(1)}$ has two nonzero components, $w^{(1)}$ has at least 3 nonzero components. Thus, one can make the following choices:
\begin{itemize}
    \item Choose two indexes $\gamma_1$ and $\gamma_2$ for which $w^{(1)}_{\gamma_1}\neq 0$ and $w^{(1)}_{\gamma_2}\neq 0$. This is possible because $w^{(1)}$ has at least 3 nonzero components.
    \item Two linearly independent vectors $v^*$ and $v^\prime$ in the relative interior of $\{v\in\Omega_2:~|~v\cdot w^{(1)}=\epsilon\}\cap \left(\bigcap_{i=2}^m \{v\in\Omega_2~:~v\cdot w^{(i)}\}<\epsilon\right)$ such that $v^*_{\gamma_1}-v'_{\gamma_1}\neq v^*_{\gamma_2}- v'_{\gamma_2}$. This is possible because the relative interior of that set has dimensionality $n-1\geq 2$. 
\end{itemize}

\textbf{Step 2:} Next we note that for any $c\geq 0$, the vector $v^{*}-[c, \dots, c]\in \Omega_1\cap\Omega_2$ and $v'-[c,\dots,c] \in\Omega_1\cap\Omega_2$. Furthermore, $[\log(1/2), \dots, \log(1/2)]$ is in the interior of $\Omega_1\cap\Omega_2$. We also note that:
\begin{align*}
\lim\limits_{c\rightarrow\infty}    \log\left([1/2, \dots, 1/2] - e^{v^*-[c,\dots, c]}\right)=\log([1/2,\dots,1/2])
\end{align*}
where the exponentiation and log are taken pointwise. Thus for some $c^*$, we must have:
\begin{align*}
u^{(0)}   &\equiv v^*-[c^*,\dots, c^*] \in \Omega_1\cap\Omega_2\\
u^{(1)} & \equiv \log\left([1/2, \dots, 1/2] - e^{v^*-[c^*,\dots, c^*]}\right) \in \Omega_1\cap \Omega_2\\
u^{(2)}  &\equiv v'-[c^*,\dots, c^*] \in \Omega_1\cap\Omega_2\\
u^{(3)}   &\equiv \log\left([1/2, \dots, 1/2] - e^{v'-[c^*,\dots, c^*]}\right) \in \Omega_1\cap \Omega_2\\
\end{align*}
and also: $e^{u^{(0)}} + e^{u^{(1)}} = [1/2, \dots, 1/2] = e^{u^{(2)}}+e^{u^{(3)}}$.

Thus, because of this fact, we can use those vectors to define the mechanism $\widehat{\mech}$ such that:
\begin{align*}
    \Pr(\widehat{\mech}(\data_i) = 0) &= e^{u^{(0)}_i}\\
        \Pr(\widehat{\mech}(\data_i) = 1) &= e^{u^{(1)}_i}\\
            \Pr(\widehat{\mech}(\data_i) = 2) &= e^{u^{(2)}_i}\\
                \Pr(\widehat{\mech}(\data_i) = 3) &= e^{u^{(3)}_i}
\end{align*}
and $\widehat{\mech}$ satisfies the necessary-for-composition constraints because $u^{(0)}\in\Omega_1\cap \Omega_2$ and $u^{(1)}\in\Omega_1\cap \Omega_2$ and $u^{(2)}\in\Omega_1\cap \Omega_2$ and $u^{(3)}\in\Omega_1\cap \Omega_2$.

We next note that for outputs $\outp=0$ and $\outp=2$, the necessary-for-composition constraint associated with $\beta^{(1)}$ is tight for $\widehat{\mech}$. Recalling that $\data_{t_1}$ is the dataset that appears on the left-hand side for the constraint involving $\beta^{(1)}$ and the corresponding secret pair is $(\secret^{(1)}_1, \secret^{(1)}_2)$,
\begin{align*}
    &\log\Pr(\widehat{\mech}(\data_{t_1})=0) - \sum\limits_{\data_\ell~:~ \secret_2^{(1)}(\data_\ell)\text{ is true}}\beta^{(1)}_\ell\log \Pr(\widehat{\mech}(\data_\ell)=0)
    \\
    &=\sum_{i=1}^n w^{(1)}_i \log\Pr(\widehat{\mech}(\data_i)=0)\\
    &=\sum_{i=1}^n w^{(1)}_i u^{(0)}_i\\
    &=\sum_{i=1}^n w^{(1)}_i (v^{*}_i-c^*)\\
    &=\sum_{i=1}^n w^{(1)}_i v^{*}_i=\epsilon\\
\end{align*}
where the second-to-last equality is because of the construction of $w^{(1)}$ (i.e., $\sum_i w^{(1)}_i c=0$ for all $c$) and the last equality is due to the construction of $v^*$.

A similar calculation for $\outp=2$ and $v^\prime$ shows that 
$$    \log\Pr(\widehat{\mech}(\data_{t_1})=2) - \sum\limits_{\data_\ell~:~ \secret_2^{(1)}(\data_\ell)\text{ is true}}\beta^{(1)}_\ell\log \Pr(\widehat{\mech}(\data_\ell)=2) = \epsilon
    $$

\noindent\textbf{Step 3:} Next we construct a postprocessing algorithm $\randalg$ such that the $\randalg\circ\widehat{\mech}$ (the algorithm that runs $\randalg$ on the output of $\widehat{\mech}$) does not satisfy the necessary-for-composition constraints.

Again, recall that $\data_{t_1}$ is the dataset that appears on the left-hand side for the constraint involving $\beta^{(1)}$ and the corresponding secret pair is $(\secret^{(1)}_1, \secret^{(1)}_2)$.
Pick a $p^*>0$ such that $p^* \leq \min(\Pr(\widehat{\mech}(\data_{t_1})=0), ~\Pr(\widehat{\mech}(\data_{t_1})=2))$.
Define the postprocessing algorithm $\randalg$ as:
\begin{align*}
    \randalg(0) &= \begin{cases} 1 &\text{ with probability }\frac{p^*}{\Pr(\widehat{\mech}(\data_{t_1})=0)}\\0 & \text{ otherwise }\end{cases}\\
    \randalg(1) &= 0\\
    \randalg(2) &= \begin{cases} 1 &\text{ with probability }\frac{p^*}{\Pr(\widehat{\mech}(\data_{t_1})=2)}\\0 & \text{ otherwise }\end{cases}\\
    \randalg(3) &= 0
\end{align*}
Then:
\begin{align}
     & \log\Pr(\randalg(\widehat{\mech}(\data_{t_1}))=1) - \hspace{-0.5cm}\sum\limits_{\data_\ell~:~ \secret_2^{(1)}(\data_\ell)\text{ is true}}\hspace{-0.5cm}\beta^{(1)}_\ell\log \Pr(\randalg(\widehat{\mech}(\data_\ell))=1)\nonumber\\
     &= \sum\limits_{i=1}^n w^{(1)}_i\log \Pr(\randalg(\widehat{\mech}(\data_i))=1)\nonumber\\
     &= \sum\limits_{i=1}^n w^{(1)}_i\log \left(p^*\frac{\Pr(\widehat{\mech}(\data_i)=0)}{\Pr(\widehat{\mech}(\data_{t_1})=0)}+p^*\frac{\Pr(\widehat{\mech}(\data_i)=2)}{\Pr(\widehat{\mech}(\data_{t_1})=2)}\right)\nonumber\\
     &= \sum\limits_{i=1}^n w^{(1)}_i\log \left(\frac{\Pr(\widehat{\mech}(\data_i)=0)}{\Pr(\widehat{\mech}(\data_{t_1})=0)}+\frac{\Pr(\widehat{\mech}(\data_i)=2)}{\Pr(\widehat{\mech}(\data_{t_1})=2)}\right)\nonumber\\
     \intertext{(since $\sum_i w_i p^*=0$)}
     &\geq \sum\limits_{i=1}^n w^{(1)}_i\log\Bigg(2\sqrt{\frac{\Pr(\widehat{\mech}(\data_i)=0)}{\Pr(\widehat{\mech}(\data_{t_1})=0)}\frac{\Pr(\widehat{\mech}(\data_i)=2)}{\Pr(\widehat{\mech}(\data_{t_1})=2)}}\Bigg)\label{eq:amgm}
     \intertext{(by the arithmetic mean/geometric mean inequality, with equality only when $\frac{\Pr(\widehat{\mech}(\data_i)=0)}{\Pr(\widehat{\mech}(\data_{t_1})=0)}=\frac{\Pr(\widehat{\mech}(\data_i)=2) }{\Pr(\widehat{\mech}(\data_{t_1})=2)}$ for all $i$ where $w^{(1)}_i\neq 0$)}\nonumber
     &= \sum\limits_{i=1}^n w^{(1)}_i\log\Bigg(\sqrt{\frac{\Pr(\widehat{\mech}(\data_i)=0)}{\Pr(\widehat{\mech}(\data_{t_1})=0)}\frac{\Pr(\widehat{\mech}(\data_i)=2)}{\Pr(\widehat{\mech}(\data_{t_1})=2)}}\Bigg)\nonumber\\
     &= \frac{1}{2}\sum\limits_{i=1}^n w^{(1)}_i\log\Bigg(\frac{\Pr(\widehat{\mech}(\data_i)=0)}{\Pr(\widehat{\mech}(\data_{t_1})=0)}\Bigg)
      +\frac{1}{2}\sum\limits_{i=1}^n w^{(1)}_i\log\Bigg({\frac{\Pr(\widehat{\mech}(\data_i)=2)}{\Pr(\widehat{\mech}(\data_{t_1})=2)}}\Bigg)\nonumber\\
    &= \frac{1}{2}\sum\limits_{i=1}^n w^{(1)}_i\log\Bigg({\Pr(\widehat{\mech}(\data_i)=0)}\Bigg) +\frac{1}{2}\sum\limits_{i=1}^n w^{(1)}_i\log\Bigg({{\Pr(\widehat{\mech}(\data_i)=2)}}\Bigg)\nonumber
    \intertext{(since $\sum_i w_i c=0$ for any constant $c$)}
     &=\epsilon/2 + \epsilon/2 = \epsilon\nonumber
\end{align}
This sequence of derivations has an inequality $\leq$ that is only true when $\frac{\Pr(\widehat{\mech}(\data_i)=0)}{\Pr(\widehat{\mech}(\data_{t_1})=0)}=\frac{\Pr(\widehat{\mech}(\data_i)=2) }{\Pr(\widehat{\mech}(\data_{t_1})=2)}$ for all $i$ where $w^{(1)}_i\neq 0$.

However, 
\begin{align*}
    &\frac{\Pr(\widehat{\mech}(\data_i)=0)}{\Pr(\widehat{\mech}(\data_{t_1})=0)}=\frac{\Pr(\widehat{\mech}(\data_i)=2) }{\Pr(\widehat{\mech}(\data_{t_1})=2)}\text{ for all $i$ where $w^{(1)}_i\neq 0 $}\\
    &\Leftrightarrow 
    \log\frac{\Pr(\widehat{\mech}(\data_i)=0)}{\Pr(\widehat{\mech}(\data_{t_1})=0)}=\log\frac{\Pr(\widehat{\mech}(\data_i)=2) }{\Pr(\widehat{\mech}(\data_{t_1})=2)}\text{ for all $i$ where $w^{(1)}_i\neq 0 $}\\
     &\Leftrightarrow 
    v^*_i-v^*_{t_1}=v'_i-v'_{t_1}\text{ for all $i$ where $w^{(1)}_i\neq 0 $}\\
    &\Leftrightarrow v^*_i-v'_i = v^*_j-v'_j \text{ for all $i,j$ where $w^{(1)}_i\neq 0 $ and $w^{(1)}_j\neq 0$}
\end{align*}
However, this is impossible by the construction of $v^*$ and $v'$ at the end of Step 1. Therefore, Equation \ref{eq:amgm} must be a strict inequality, and that means that $\randalg\widehat{\mech}$ does not satisfy at least one of the necessary-for-composition constraints.
\end{proofE}

\begin{proof}[proof sketch (see \shortlong{\cite{nfcarxiv}}{the appendix} for the full proof).]
The proof is based on the observations (1) that the feasible region (allowable real-valued vectors of the form $[\Pr(\mech(\data_1)=\outp),\dots, \Pr(\mech(\data_{\numdata})=\outp)]$) for the system of constraints is only convex when each $\beta^{(i)}$ has a single non-zero component and (2) post-processing acts like a convex operation on those allowable vectors. Hence, we construct a mechanism for which the constraints are tight but are violated after postprocessing.
\end{proof}

The conclusion we reach is that, if we want a set of Pufferfish mechanisms to compose with each other and be post-processing invariant, we should design those mechanisms to satisfy differential privacy. 
The next section presents sufficient conditions---how to construct those differential privacy mechanisms and choose their privacy parameter $\epsDP$, so that they also satisfy Pufferfish with a desired privacy parameter $\epsPuffer$ and compose linearly.
Our results are specific to $\epsilon$-Pufferfish (Definition \ref{def:pufferfish}), which mirrors $\epsilon$-differential privacy (Definition \ref{def:dp}).
For other variations of Pufferfish that are analogous to $(\epsilon, \delta)$-differential privacy or Renyi-DP~\cite{pierquin2024renyi}, we conjecture that similar results would hold; e.g., composable $(\epsilon, \delta)$-Pufferfish must also satisfy $(\epsilon, \delta)$-DP.

\section{Constructing General Pufferfish Mechanisms with Composition
}\label{sec:puffer-mech}
Theorem~\ref{thm:nfc} and Theorem~\ref{thm:post-process-gen} show that a set of linearly composable Pufferfish mechanisms that support post-processing invariance must satisfy a set of constraints that
whittle down
to the types that are used by differential privacy.
These insights provide formal support for prior work \cite{song2017composition,song2017pufferfish,cao2017quantifying} that designed linearly composable Pufferfish mechanisms by taking $\epsDP$-differential privacy mechanisms and showing that they also satisfy $\epsPuffer$-Pufferfish for some $\epsPuffer \geq \epsDP$. Specifically, Song et al. \cite{song2017pufferfish,song2017composition} proposed an extension of the Laplace Mechanism (called MQM) to defend against attacker priors that all share the same Markov network structure. Meanwhile, Cao et al. \cite{cao2017quantifying} extend arbitrary DP mechanisms to Markov chains and defend against finitely many Markov chain priors while allowing the Pufferfish privacy parameter to compose.

In this section, we generalize these ideas to show that in any Pufferfish setting for tabular data (including finite Markov chains), the same recipe holds. Any collection of per-entry $\epsDP$-differential privacy mechanisms can be translated into $\epsPuffer$-Pufferfish privacy, and the Pufferfish parameters linearly compose. Furthermore, the translation between $\epsDP$ and $\epsPuffer$ is guided by a new concept we call an $\ab$-influence curve. The $\ab$-influence curve only depends on the set of secret pairs $\secretset$ and the priors in $\priorset$. Although the $\ab$-influence curve can be computationally expensive to compute exactly, one only needs an upper bound to get privacy guarantees. The tighter the upper bound is, the better the resulting utility. This is analogous to sensitivity computation in differential privacy, which is easy to upper bound (providing privacy guarantees), but can be computationally difficult to estimate exactly. 

Before going into the technical details, we first explain the intuition behind the $\ab$-influence curve. Suppose there is an infectious disease dataset and a person $A$ in this dataset. Person $A$ has a set $\highinf_1$ containing $b_1$ people who are highly likely to get infected if $A$ gets sick (e.g., $\highinf_1$ is the set of close family members), a set $\highinf_2$ (with $\highinf_1\subset \highinf_2$) containing $b_2$ people who are moderately likely to get infected if $A$ gets sick (e.g., $\highinf_2$ contains $\highinf_1$ and close friends), and a third set $\highinf_3$ containing $b_3$ people who are only slightly likely to get infected from $A$ (e.g., $\highinf_3$ contains $\highinf_2$ and acquaintances). 

Now, suppose $A$ can increase the chance of infections of people not in $\highinf_1$ by at most a factor of $a_1$, can increase the chance of infection of people not in $\highinf_2$ (resp., $\highinf_3$) by at most a factor of $a_2 <a_1$ (resp. $a_3 < a_2$), and so on. These tuples $(a_1, b_1), (a_2, b_2), (a_3, b_3)$ form the $\ab$-influence curve. That is, we can think of the $a$ value as a function of $b$. 
When considering $b$ number of people most closely associated with $A$ (or the number of cells in a tabular dataset most closely associated with a secret), 
$a(b)$ upper bounds the effect of $A$ on  anyone outside of this close group of size $b$. 

This $\ab$-influence curve provides translations between $\epsDP$ and $\epsPuffer$ as follows. An algorithm satisfying $\epsDP$-differential privacy also satisfies Pufferfish with privacy parameter $\epsPuffer=b\epsDP + a(b)$ for any choice of $b$. This means that the best $\epsPuffer$ parameter is $\min_b b\epsDP + a(b)$. Thus, if one has the $\ab$-influence curve, or knows at least a few points on the curve, they can compute the upper bound on the Pufferfish privacy parameter.

In order to show how the $\ab$-influence curve is a generalization of the work of Song et al. ~\cite{song2017pufferfish}, we briefly describe the Markov Quilt Mechanism and \emph{max-influence} concept in Section \ref{sec:mqminf}. Then, we formally define $\ab$-influence, prove the translation between $\epsDP$ and $\epsPuffer$, and present the slightly-better-than-additive composition results in Section \ref{sec:sufficient}. 
We discuss practical considerations in estimating the $\ab$-curve, including robustness to mis-specifying priors in Section \ref{sec:practical}.
We also give worked out examples of the $\ab$-influence curve in Section \ref{sec:abcurve}.

\subsection{The MQM and Max-Influence~\cite{song2017pufferfish}}\label{sec:mqminf}

Song et al.~\cite{song2017pufferfish} showed that applying the Laplace mechanism of differential privacy to a database generated by a Markovian structure can realize composable Pufferfish privacy.
As we will show later, this finding is a special instance of our generalized result.
We briefly describe the result from~\cite{song2017pufferfish} as a background.

Song et al.~\cite{song2017pufferfish} focused on a specific case where each entry of the database has a dependency described by a Markovian-structure graph $G$.
Figure~\ref{fig:markov-quilt} shows an example where the database $X=\{X_1,\dots, X_n\}$ is generated by a Markov chain, i.e., the value of $X_i$ only depends on $X_{i-1}$. 
They introduced a concept called a \emph{Markov Quilt}, which is an extension of the Markov Blanket~\cite{markov_blanket}. A set of nodes $X_Q$ is a Markov Quilt for a node $X_i$ if: (1) conditioning on $X_Q$, the graph $G$ is partitioned into two independent node sets,  $X_N$ and $X_R$,  such that $X=X_N \cup X_Q \cup X_R$, and $X_i \in X_N$, and (2) for all $x_R \in \mathcal{X}^{card(X_R)},  x_Q \in \mathcal{X}^{card(X_Q)},  x_i \in \mathcal{X}$,  we have $\Pr[X_R=x_R|X_Q=x_Q,  X_i=x_i] = \Pr[X_R=x_R|X_Q=x_Q]$.
In other words, the graph is partitioned in a way that the values of nodes $X_R$ only depend on $X_Q$, and not on $X_N$ (see Figure~\ref{fig:markov-quilt}).

Song et al.~\cite{song2017pufferfish} proposed a composable Pufferfish privacy mechanism called a Markov Quilt Mechanism (MQM) that can hide the value of each node $X_i$ in the Markov chain. 
In fact, the MQM mechanism with $\epsPuffer$-Pufferfish privacy is equivalent to a Laplace mechanism of differential privacy of a certain $\epsDP$ (i.e., a Laplace noise of scale $\frac{L}{\epsDP}$ is added to an $L$-Lipschitz query output).
The relationship between $\epsPuffer$ and $\epsDP$ is as follows.
First, they define \emph{max-influence} as below:
\begin{definition}[Max-Influence~\cite{song2017pufferfish}]\label{def:max-influence-markov}
    The max-influence of a variable $X_i$ on a set of variables $X_A$ under $\priorset$ is defined as:
    \begin{align*}
        e_{\priorset}(X_A|X_i) = \sup_{\prior \in \priorset} \max_{a,b \in \mathcal{X}, x_A \in \mathcal{X}^{card(X_A)}}\log \Bigg | \frac{\Pr(X_A=x_A|X_i=a, \prior)}{\Pr(X_A=x_A|X_i=b, \prior)}\Bigg |. 
    \end{align*}
\end{definition}
Then, Song et al.~\cite{song2017pufferfish} showed that applying a Laplace mechanism with $\epsDP$-differential privacy in fact achieves composable $\epsPuffer$-Pufferfish privacy, where 
$\epsDP=\frac{\epsPuffer-e_{\priorset}(X_Q|X_i)}{card(X_N)}$. Here, $card(X_N)$ is the cardinality of $X_N$.
MQM only works when there exists a partition such that $\epsPuffer > e_{\priorset}(X_Q|X_i)$, and the proposal~\cite{song2017pufferfish} was to enumerate all the possible partitions and choose the partition with the largest $\epsDP$ for utility.

\begin{figure}
    \centering
    \includegraphics[width=1.\linewidth]{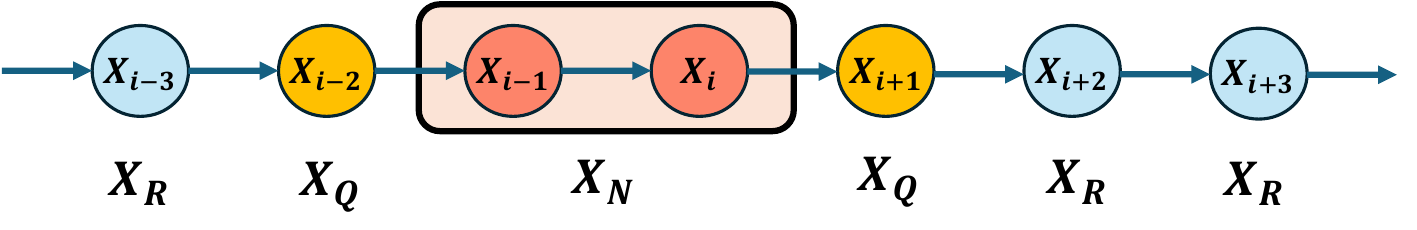}
    \caption{An illustration of $X_R, X_Q, X_N$ in Markov Chain. 
    }
    \label{fig:markov-quilt}
\end{figure}

\subsection{Sufficient Condition for Composition}
\label{sec:sufficient}

The MQM mechanism~\cite{song2017pufferfish} only works under databases whose generation follows a Markovian structure and is restricted to Laplace noise.
In this section, we introduce a generalized framework that can translate an arbitrary per-entry $\epsDP$-differential privacy mechanism into a composable Pufferfish mechanism.
We consider a general tabular dataset with $m$ records and $n$ attributes without loss of generality. Let  $I=\{(i,j)~:~i\in\{1,\dots, m\}, j\in\{1,\dots, n\}\}$ be the entry index set of the entire dataset ($|I|=mn$). 
We do not restrict the generation of the dataset (i.e., it doesn't have to come from a Markov chain, etc.).

Assume an arbitrary index set $ \lowinf \subseteq I$, which we consider to be \emph{low-influence} region, and a complement set $\highinf = I \setminus \lowinf$, which we consider as \emph{high-influence} region.
$\dataquilt$ is the entry values of $\data$ restricted to indices in $\lowinf$, and the corresponding random variable is $\Dataquilt$.
While the separation of $\highinf$ and $\lowinf$ can be arbitrary, we are interested in finding a separation where the information leakage about the secret is minimal through $\lowinf$, when the size of $\highinf$ is restricted to a certain value, $b$.
The worst-case leakage through $\lowinf$ in such optimal ($\highinf$, $\lowinf$) pairs for all possible $b$ values can be summarized in a single curve, which we call the $\ab$-influence curve.
We formally define \emph{$\ab$-influence} below:

\begin{definition}[$\ab$-influence]\label{def:ab-influence}
    Assume an arbitrary split of the index set $I$, $\lowinf \subseteq I$ and $\highinf = I\setminus\lowinf$.
    Given any $(\secret_i,\secret_j)\in \secretset$ and any possible $\Dataquilt$,
    let $\universe_{\lowinf}=\{\dataquilt_1,\dataquilt_2,\dots\}$ denote the set of all such
    shared subsets supported by $\priorset$.
    The worst-case information leakage through $\lowinf$ for secret pair $(\secret_i, \secret_j)$ is,
    \begin{align*}
    a^{\dagger}_{(\secret_i, \secret_j)}(\lowinf)
    \;=\;
    \sup_{\prior\in\priorset}
    \max_{\dataquilt\in\universe_{\lowinf}}\Bigg |
    \log\frac{
    \Pr[\Dataquilt=\dataquilt\mid \secret_i,\prior]
    }{
    \Pr[\Dataquilt=\dataquilt\mid \secret_j,\prior]
    }\Bigg |.
    \end{align*}
    If we define $a^\star_{(\secret_i,\secret_j)}(b)$ for $(\secret_i, \secret_j)$ as:
    \begin{align*}
        a^\star_{(\secret_i,\secret_j)}(b) \;=\; \inf_{\lowinf \subseteq I:\, |\highinf|\le b} a^{\dagger}_{(\secret_i,\secret_j)}(\lowinf),
    \end{align*}
    the $\ab$-influence curve is an upper-bounding curve for all $a^\star_{(\secret_i,\secret_j)}(b)$:
    \begin{align*}
    a(b) \;\ge\; \max_{(\secret_i,\secret_j)\in\secretset} a^{\star}_{(\secret_i,\secret_j)}(b),
    \end{align*}
    and we say that $(\priorset,\secretset)$ satisfies \emph{$\ab$-influence}. Note that by definition, the curve is monotonically non-increasing.
\end{definition}

Essentially, the $\ab$-influence curve is a succinct summary of the risk of revealing a counterfactual database, where $\lowinf$ entries stay the same, but at most $b$ number of $\highinf$ entries are replaced.

\begin{theoremE}[Translation between DP and Pufferfish]\label{the:markov-trick}
Fix a family of priors $\priorset$, and a set of secret pairs $\secretset$. 
Suppose $(\priorset,\secretset)$ satisfies $\ab$-influence (Definition~\ref{def:ab-influence}), and $(a,b)$ is a point on the $\ab$-influence curve.
Suppose a randomized mechanism $\mech$ achieves per-entry $\epsDP$-differential privacy, with $\epsDP \leq \frac{\epsPuffer - a}{b}$ and $a < \epsPuffer$. Then, $\mech$ satisfies $\epsPuffer$-Pufferfish privacy with respect to the secret set $\secretset$ and set of priors $\priorset$.
\end{theoremE}

\begin{proofE}
    For any $\outp$, we randomly select $ |\dataquilt| = |I| - b$ entries from the dataset and conditioned them as $\Dataquilt = \dataquilt$, denote the rest $b_1$ entries as $\dataneighbor$, 
\begin{align*}
   & \frac{\Pr(\mech(\Data)=\outp|\secret_i)}{\Pr(\mech(\Data)=\outp|\secret_j)}\\
   = & \frac{\sum_{\dataquilt} \Pr(\mech(\Data)=\outp|\secret_i, \Dataquilt=\dataquilt)\Pr(\Dataquilt=\dataquilt|\secret_i)}{\sum_{\dataquilt} \Pr(\mech(\Data)=\outp|\secret_j, \Dataquilt=\dataquilt)\Pr(\Dataquilt=\dataquilt|\secret_j)} \\
    \leq & \max_{\dataquilt}  \frac{\Pr(\mech(\Data)=\outp|\secret_i, \Dataquilt=\dataquilt)}{\Pr(\mech(\Data)=\outp|\secret_j, \Dataquilt=\dataquilt)} \cdot \frac{\Pr(\Dataquilt=\dataquilt|\secret_i)}{\Pr(\Dataquilt=\dataquilt|\secret_j)},
\end{align*}

The first ratio is bounded by $\epsPuffer-a$ in the following way:
{\small
\begin{align*}
& \frac{\Pr(\mech(\Data)=\outp|\secret_i, \Dataquilt=\dataquilt)}{\Pr(\mech(\Data)=\outp|\secret_j, \Dataquilt=\dataquilt)}\\
= &  \frac{\sum_{\dataneighbor} \Pr(\mech(\Data)=\outp|\secret_i, \Data=[\dataquilt, \dataneighbor])\Pr(\dataneighbor|\secret_i, \dataquilt)}{\sum_{\dataneighbor'}\Pr(\mech(\Data)=\outp|\secret_j, \Data=[\dataquilt, \dataneighbor'])\Pr(\dataneighbor'|\secret_j, \dataquilt)}\\
\leq &  \frac{\left(\max_{\dataneighbor} \Pr(\mech(\Data)=\outp|\secret_i, \Data=[\dataquilt, \dataneighbor])\right)\sum_{\dataneighbor}\Pr(\dataneighbor|\secret_i, \dataquilt)}{\left(\min_{\dataneighbor'}\Pr(\mech(\Data)=\outp|\secret_j,\Data=[\dataquilt,\dataneighbor'])\right)\sum_{\dataneighbor'}\Pr(\dataneighbor'|\secret_j,\dataquilt)}\\
= & \frac{\max_{\dataneighbor} \Pr(\mech(\Data)=\outp|\secret_i, \Data=[\dataquilt, \dataneighbor])}{\min_{\dataneighbor'}\Pr(\mech(\Data)=\outp|\secret_j, \Data=[\dataquilt, \dataneighbor'])}\\
\leq& \exp(\epsPuffer-a),
\end{align*}}
By $\epsDP$-differential privacy and group privacy, for any $\dataneighbor,\dataneighbor'$ the ratio
$\frac{\Pr(\mech(\Data)=\outp|\secret_i, \Data=[\dataquilt, \dataneighbor])}{\Pr(\mech(\Data)=\outp|\secret_j, \Data=[\dataquilt, \dataneighbor'])}$
is at most $\exp(b\epsDP)$. 
The second ratio is bounded by the max influence $e_\priorset(\Dataquilt|\secret_i, \secret_j)$, which is bounded by $a$. Therefore, the total leakage is bounded by $\exp(\epsPuffer)$.  
\end{proofE}

Theorem \ref{the:markov-trick} can be written as $\epsPuffer=b\epsDP + a$, which can also be interpreted as saying that the Pufferfish parameter uses differential privacy's group privacy with a group size of $b$ (accounting for the $b\epsDP$ part of the formula) plus a penalty $a$ for cross-record correlation that extends beyond the group.
Our next theorem, about composition, shows that this extra penalty only needs to be paid once, so that the composition of the Pufferfish parameter is still linear but is better than just adding up the Pufferfish parameters:

\begin{theoremE}[Sufficient Condition for Composition]\label{the:sfc}
    Fix a family of priors $\priorset$ and a set of secret pairs $\secretset$. 
    Consider $k$ mechanisms $\mech_1, ..., \mech_k$ applied sequentially to the same dataset, each satisfying $\epsDP_1, ..., \epsDP_k$-differential privacy, where $\epsDP_\ell \leq \frac{\epsPuffer_\ell - a_\ell}{b_\ell}$.
    Here, $(a_\ell, b_\ell)$ are arbitrary points on the $\ab$-influence curve of $(\priorset,\secretset)$ with $a_\ell < \epsPuffer_\ell$.
    That is, each $\mech_\ell$ achieves $\epsPuffer_\ell$-Pufferfish according to Theorem~\ref{the:markov-trick}.
    Then, the joint release of the outputs 
    satisfies $(\max_\ell a_\ell +\sum_{\ell=1}^k \epsPuffer_\ell - \sum_{\ell=1}^k a_\ell)$-Pufferfish privacy.
    Note that the composition is sub-additive due to ($\max_\ell a_\ell - \sum_{\ell=1}^k a_\ell)$ being always negative.
\end{theoremE}

\begin{proofE}
    (The proof below is adapted from Appendix B in \cite{song2017composition}.)
    
    For any secret pair $(\secret_i, \secret_j)$, let the low-influence sets used to determine the $\epsDP$ parameters for $\mech_1, \dots, \mech_k$ be $\Dataquilt_1, \dots, \Dataquilt_k$, respectively. Without loss of generality, suppose $b_1 \leq b_\ell$ for $\ell=1,\dots, k$. Then we have the following:
\begin{align}
    &\frac{\Pr[\bigwedge_{\ell=1}^k\mech_\ell(\Data)=\outp_\ell|\secret_i]}{\Pr[\bigwedge_{\ell=1}^k\mech_\ell(\Data)=\outp_\ell|\secret_j]}\nonumber\\
    \leq \max_{\dataquilt_1} & \frac{\Pr[\bigwedge_{\ell=1}^k\mech_\ell(\Data)=\outp_\ell|\secret_i, \Dataquilt_1=\dataquilt_1]}{\Pr[\bigwedge_{\ell=1}^k\mech_\ell(\Data)=\outp_\ell|\secret_j, \Dataquilt_1=\dataquilt_1]} ~ 
     \frac{\Pr[\Dataquilt_1=\dataquilt_1 |\secret_i]}{\Pr[\Dataquilt_1=\dataquilt_1 |\secret_j]},\label{eqn:2nd_ratio_composition}
\end{align}
where we randomly select $|\lowinf_1| = |I| - b_1$ entries from the dataset and conditioned them as $\Dataquilt_1 = \dataquilt_1$. Denote the rest $b_1$ entries as $\dataneighbor$.  The first ratio is bounded by the following:
{ \small
\begin{align*}
    \frac{\Pr[\bigwedge_{\ell=1}^k\mech_\ell(\Data)=\outp_\ell|\secret_i, \Dataquilt_1=\dataquilt_1]}{\Pr[\bigwedge_{\ell=1}^k\mech_\ell(\Data)=\outp_\ell|\secret_j, \Dataquilt_1=\dataquilt_1]}
    &=\frac{\sum_{\dataneighbor}\Pr[\bigwedge_{\ell=1}^k\mech_\ell(\Data)=\outp_\ell|\Data=[\dataquilt_1, \dataneighbor]]~\Pr[\dataneighbor|\secret_i, \dataquilt_1]}{\sum_{\dataneighbor}\Pr[\bigwedge_{\ell=1}^k\mech_\ell(\Data)=\outp_\ell|\Data=[\dataquilt_1, \dataneighbor]]~\Pr[\dataneighbor|\secret_j,\dataquilt_1]}\\
    \leq &\frac{\left(\max_{\dataneighbor}\Pr[\bigwedge_{\ell=1}^k\mech_\ell(\Data)=\outp_\ell|\Data=[\dataquilt_1, \dataneighbor]]\right)~\sum_{\dataneighbor}\Pr[\dataneighbor|\secret_i, \dataquilt_1]}{\left(\min_{\dataneighbor}\Pr[\bigwedge_{\ell=1}^k\mech_\ell(\Data)=\outp_\ell|\Data=[\dataquilt_1, \dataneighbor]]\right)~\sum_{\dataneighbor}\Pr[\dataneighbor|\secret_j, \dataquilt_1]}\\
    =&\frac{\max_{\dataneighbor}\Pr[\bigwedge_{\ell=1}^k\mech_\ell(\Data)=\outp_\ell|\Data=[\dataquilt_1, \dataneighbor]]}{\min_{\dataneighbor}\Pr[\bigwedge_{\ell=1}^k\mech_\ell(\Data)=\outp_\ell|\Data=[\dataquilt_1, \dataneighbor]]} \\
    =& \prod_{\ell=1}^k \frac{\max_{\dataneighbor}\Pr[\mech_\ell(\Data)=\outp_\ell|\Data=[\dataquilt_1, \dataneighbor]]}{\min_{\dataneighbor}\Pr[\mech_\ell(\Data)=\outp_\ell|\Data=[\dataquilt_1, \dataneighbor]]} 
    \leq \prod_{\ell=1}^k e^{b_1\epsDP_\ell} \\
    &\text{(using group privacy because up to $b_1$ values can change)}\\
    \leq & \exp\left(\sum_{\ell=1}^k \frac{\epsilon_\ell - a_\ell}{b_\ell}b_1\right)\\
    \leq & \exp\Big(\sum_{\ell=1}^k [\epsilon_\ell - a_\ell]\Big)\\
\end{align*}
}
The second ratio from Equation \ref{eqn:2nd_ratio_composition} is bounded by $a_1$, according to the Definition \ref{def:ab-influence}. Therefore, we have the ratio bounded by:
\begin{align*}
    \frac{\Pr[\bigwedge_{\ell=1}^k \mech_\ell(\Data)=\outp_\ell|\secret_i]}{\Pr[\bigwedge_{\ell=1}^k \mech_\ell(\Data)=\outp_\ell|\secret_j]}
    \leq & \exp\Big(\sum_{\ell=1}^k [\epsilon_\ell - a_\ell ]+a_1 \Big) \\
    \leq & \exp\Big(\sum_{\ell=1}^k \epsPuffer_\ell - \sum_{\ell=1}^k a_\ell + \max_l a_\ell \Big). 
\end{align*}
\end{proofE}

Theorem~\ref{the:markov-trick} and Theorem~\ref{the:sfc} together provide us with a powerful tool to translate differential privacy mechanisms into composable Pufferfish mechanisms.
With an $\ab$-influence curve, we can use any popular DP algorithms, e.g., randomized response~\cite{RR}, exponential mechanism~\cite{EM}, 
etc.,
depending on the use case, and not be restricted to the Laplace mechanism ~\cite{song2017composition, song2017pufferfish}, for example. Our theorems ensure that all of them would be valid Pufferfish mechanisms, and they would compose linearly.

Our results have several key implications.
First, this reduces the manual effort needed to design Pufferfish mechanisms for different use cases and applications.
Currently, there are not many existing (composable) Pufferfish mechanisms to choose from, unlike differential privacy, which has a rich set of well-understood mechanisms. Currently, using Pufferfish to create privacy-preserving data products is complex: the data curator needs to design new mechanisms, prove they satisfy Pufferfish individually, and then manually analyze their joint privacy leakage (compositional properties). This process, especially the last step, requires significant expenditure of manual effort.
However, our result implies that one can re-purpose existing differential privacy mechanisms to achieve Pufferfish. It requires a one-time (per application) computation of the \ab-curve, and then, any DP algorithm and mature DP frameworks like OpenDP \cite{opendp} can be used.
In fact, our theoretical results imply that any composable Pufferfish mechanism must meet DP-like constraints, so adapting differential privacy is essentially inevitable and not a heuristic decision. We show the practical benefit of this in our evaluation (Section~\ref{sec:exp}).
Finally, our result provides a retrospective understanding of what Pufferfish privacy is achieved for data that were already released under differential privacy.
\subsection{Practical Considerations} \label{sec:practical}
We next consider the following practical questions. Are there efficient ways to upper-bound the $\ab$-influence curve? How can we compute the curve when there is uncertainty about the data prior (e.g., the prior is mis-specified)? What happens if, instead of specifying the priors $\priorset$ directly, the data curator specifies them indirectly as ``the set of all priors having a given $\ab$-influence curve,'' but this curve is incorrect for the true distribution?

\subsubsection{Efficient Curve Estimation.}
\label{sec:curve_estimation}
Computing the $\ab$-influence curve \emph{exactly} may require a brute-force search over all index set splits $(\lowinf,\highinf)$ and all secret pairs $(\secret_i,\secret_j)$ in Definition~\ref{def:ab-influence} (for some well-defined priors, the compute can be much simpler, as we show in Section~\ref{sec:abcurve}). When this is infeasible, the goal is to efficiently compute an upper bound $\hat{a}(b)$ on the curve to get an upper bound on $\epsPuffer$.
A simple strategy is to sample a subset of the index set splits and only calculate the $\ab$-influence curve over that subset, which produces an upper bound because the true curve is an infimum over all the possible index splits.
While choosing a good subset may not always be straightforward, we present an efficient heuristic for graph-structured priors, where each secret corresponds to the value of a node in a Markov or Bayesian network (e.g., \cite{song2017composition}).
For a secret pair $(\secret_i,\secret_j)$ associated with a node $X_s$, and a small value of $b$, we construct candidate splits by placing $X_s$ and its $b-1$ nearest nodes into the high-influence region. Formally, let $\mathcal{N}_{b-1}(X_s)$ denote the collection of all possible sets of $b-1$ nearest nodes to $X_s$ in the graph. For each $N \in \mathcal{N}_{b-1}(X_s)$, we define $\highinf_b^N(X_s) = \{X_s\} \cup N$ and $\lowinf_b^N(X_s) = I \setminus \highinf_b^N(X_s)$. 
We use these candidate sets (with  $a^\dagger$ defined as in Definition \ref{def:ab-influence}) to compute $\hat{a}(b)$ as follows:
\begin{align*}
    \hat{a}^\star_{(\secret_i,\secret_j)}(b)
    &=
    \min_{N \in \mathcal{N}_{b-1}(X_s)}
    a^{\dagger}_{(\secret_i,\secret_j)}(\lowinf_b^N(X_s))\\
\text{ and }\hat{a}(b)
    &=
    \max_{(\secret_i,\secret_j)\in \secretset}
    \hat{a}^\star_{(\secret_i,\secret_j)}(b).
\end{align*}
This computation is manageable for small values of $b$.
The intuition behind this heuristic is that 
in graph-structured priors, nodes that are closest to $X_s$ are typically the most strongly correlated with it, and therefore their values are more likely to reveal information about the secret. 
We show an example of applying this efficient curve estimation heuristic to a Bayesian network in Section~\ref{sec:abcurve}.

\subsubsection{Uncertain or mis-specified prior.}
\label{sec:uncertainty}
Suppose a practitioner specifies a $\prior$ that differs from the true prior $\prior^*$, but is not too far off---for example, the probabilities have up to 20\% error (i.e., $0.8 \Pr[\Dataquilt = \dataquilt \mid \secret, \prior] \leq \Pr[\Dataquilt = \dataquilt \mid \secret, \prior^*] \leq 1.2\Pr[\Dataquilt = \dataquilt \mid \secret, \prior]$). We can still upper bound the $\ab$-influence curve in the presence of such uncertainties. In general, suppose there is some lower bound
$l(\dataquilt,\secret)$ (e.g., $0.8 \Pr[\Dataquilt = \dataquilt \mid \secret, \prior]$) and some upper bound $u(\dataquilt,\secret)$ (e.g., $1.2\Pr[\Dataquilt = \dataquilt \mid \secret, \prior]$) on $\Pr[\Dataquilt = \dataquilt \mid \secret, \prior^*]$
for all $\secret \in \secretset$ and $\Dataquilt$. 
Then, the value $a^\dagger_{(\secret_i,\secret_j)}$ from Definition~\ref{def:ab-influence} can be replaced by:
\begin{equation}
\begin{aligned}
    \overline{a}^\dagger_{(\secret_i,\secret_j)}(\mathcal{L})
= \max_{\dataquilt}
\max\!\Bigg\{
& \left|
\log
\frac{u(\dataquilt,\secret_i)}
     {l(\dataquilt,\secret_j)}
\right|,
\left|
\log
\frac{l(\dataquilt,\secret_i)}
     {u(\dataquilt,\secret_j)}
\right|
\Bigg\}.
\end{aligned}
\label{eq:uncertainty_ub}
\end{equation}
Using this upper bound $\overline{a}^\dagger_{(\secret_i,\secret_j)}$ in place of $a^\dagger_{(\secret_i,\secret_j)}$ provides an upper bound on the true $a(b)$-influence curve. 
As an example, we apply this technique to a Bayesian network in Section \ref{sec:abcurve}.

\subsubsection{Mis-specified $\ab$-influence Curve.} 
Suppose the practitioner specifies the prior set $\priorset$ as 
``the set of all priors whose $\ab$-influence curve is upper bounded by a given function $a_1(b)$'', but the actual influence curve for the true prior is some other function $a_2(b)$, which is not upper bounded by $a_1(b)$.
To get a target Pufferfish parameter $\epsPuffer$, the practitioner will incorrectly set $\epsDP=\frac{\epsPuffer - a_1(b^\star)}{b^\star}$ for some integer $b^\star$. What is the true Pufferfish parameter $\epsilon^\prime_{\text{Puffer}}$ that is achieved by this $\epsDP$-mechanism? Using Theorem \ref{the:markov-trick},
\begin{align*}
    \epsilon'_{\text{puffer}} &= \inf\limits_{b > 0} \big( a_2(b) + b \cdot \frac{\epsPuffer - a_1(b^\star)}{b^\star}\big) \\
    & \leq \big( a_2(b^\star) + b^\star \cdot \frac{\epsPuffer - a_1(b^\star)}{b^\star}\big) \\
    &= \epsPuffer + (a_2(b^\star) - a_1(b^\star)).
    \label{eq:uncertainty_ub}
\end{align*}
Thus, if the practitioner fails to specify a correct upper bound on the influence curve, 
the actual achieved privacy parameter might be weaker by at most the difference between the two curves at $b^\star$: $a_2(b^\star) - a_1(b^\star)$.

\subsection{Example $\ab$-influence Curves}\label{sec:abcurve}

Dependent DP~\cite{DependDP}, which assumes an item is correlated with at most L-1 other items, is a special case of the $\ab$-influence curve where $a(b)=\infty$ for $b < L$ and $a(b)=0$ for $b \ge L$. The connection between differential privacy and Pufferfish  with an independence assumption \cite{pufferfish} means that per-entry DP is equivalent to setting $a(0)=\infty$ and $a(b)=0$ for $b\geq 1$. Similarly, when there are $n$ attributes, using group DP to protect groups of $m^\prime$ records is equivalent to setting $a(b)=\infty$ for $b<m^\prime n$ and $a(b)=0$ otherwise. 
In other applications, priors are often specified as Markov and Bayesian networks \cite{song2017composition, song2017pufferfish,beinlich1989alarm}. Thus, we next provide examples of $\ab$-influence curve derivations for Markov chains, multivariate Gaussian priors, and real-world Bayesian networks with prior uncertainty.

\textbf{\emph{Markov Chain.}} 
The $\ab$-influence curve for a simple Markov chain can be calculated analytically as the following.
The proofs can be found in the \shortlong{full version \cite{nfcarxiv}}{appendix}.

\begin{example}\label{ex:markov}
We consider a binary and time-homogeneous chain with each node's value being 0 or 1. 
We consider the value of each node as a secret.
Its transition matrix is:
\[ P =
\begin{bmatrix}
p & 1-p\\
1-q & q
\end{bmatrix},
\qquad 0<p,q<1,\]
and the prior distribution of the Markov chain is its stationary distribution. We set the chain has length $T$, and $T$ is large enough so we only consider $b <<T$. 
\end{example}

In this case, we can have a closed-form solution for $\ab$-influence curve. 
Let us consider an arbitrary $i$-th node $X_i$'s value as a secret.
for a fixed $b$, we partition the nodes into $\highinf$ and $\lowinf$, such that $\highinf$ contains $X_i$ and $b-1$ surrounding nodes.
Let $d_L$ (resp., $d_R$) denote the distance from $X_i$ to the closest node of $\lowinf$ on the left (resp., right), so that $d_L+d_R-1=b$.
By the Markov property, conditioning on all variables in $\lowinf$ influences $X_i$ only through these two closest low-influence nodes, and the leakage through $\lowinf$ reduces to the leakage through those two nodes.

Thus, finding an optimal $\ab$-influence curve becomes a one-dimensional optimization problem of finding $(d_L,d_R)$ that maximizes the leakage.
The maximum leakage occurs when these nodes both have the same value of 0 or 1.
When $q>p$, the chain tends to stay at 1 more strongly than at 0, so the leakage is maximum when these nodes have values 0.
When $q\leq p$, the worst case is instead achieved when both have values $1$.
Finally, from midpoint convexity, the influence is minimized when the two nearest low-influence nodes are placed as evenly as possible on both sides of $X_i$, giving
$d_L^\star=\lfloor(b+1)/2\rfloor$ and $d_R^\star=\lceil(b+1)/2\rceil$.
In summary, the optimal $\ab$-influence curve can be calculated as:

\begin{theoremE}\label{thm:markov-example}
Given the binary Markov chain in Example~\ref{ex:markov}, 
let $\lambda=q+p-1$, the $\ab$-influence curve is calculated by 
\[
a^\star(b) = 
\Bigg|\log \frac{\pi_0+\lambda^{d_L}(1-\pi_0)}{\pi_0-\lambda^{d_L}\pi_0}\Bigg|+\Bigg|\log \frac{\pi_0+\lambda^{d_R}(1-\pi_0)}{\pi_0-\lambda^{d_R}\pi_0}\Bigg|\text{ if $q > p$},
\] \[
a^\star(b) = 
\Bigg|\log \frac{\pi_1+\lambda^{d_L}(1-\pi_1)}{\pi_1-\lambda^{d_L}\pi_1}\Bigg|+\Bigg|\log \frac{\pi_1+\lambda^{d_R}(1-\pi_1)}{\pi_1-\lambda^{d_R}\pi_1}\Bigg|\text{ if $q \leq p$},
\] 
where $d_L=\Big\lfloor\frac{b+1}{2}\Big\rfloor$ and $d_R=\Big\lceil\frac{b+1}{2}\Big\rceil$. 
The star ($^\star$) indicates that this curve is optimum (tight).
\end{theoremE}
\begin{proofE}
The binary Markov Chain has the stationary distribution as:
\[
\pi_0=\frac{1-q}{2-p-q},\qquad
\pi_1=\frac{1-p}{2-p-q}.
\]
Using the stationary distribution as the prior, the chain is reversible, and the reverse transition matrix is also $P$. 
Let $\lambda=p+q-1$,
the k-step transition matrix (forward or backwards) is represented as:
\[
P^k_{0,0}=\pi_0+\lambda^k(1-\pi_0), P^k_{1,0}=\pi_0-\lambda^k\pi_0, 
\]
\[
P^k_{1,1}=\pi_1+\lambda^k(1-\pi_1), P^k_{0,1}=\pi_1-\lambda^k\pi_1. 
\]
 
Assume the secret is the value of each node, i.e., the secret pair is $(X_i=0, X_i=1), \forall i$. 
Given $b$, we can find $\ab$ as follows.
For each $i$, naturally, the high-influence region $\highinf$ will be $b$ nodes around $X_i$, including itself. 
Let $d_L$, $d_R$ be the distance (in number of nodes) between $X_i$ and its nearest left and right nodes that are in the low-influence region $\lowinf$.
In other words, $\highinf$ is $X_i$, $d_L-1$ nodes to its left, and $d_R-1$ nodes to its right ($d_L+d_R-1=b$), and the rest are $\lowinf$.
By the Markov property, the information of $X_i$ leaking through the entire $\lowinf$ is equivalent to the information leaking through the two nodes $X_{i-d_L}$ and $X_{i+d_R}$ (left and right closest nodes in $\lowinf$). 
If we let $l, r$ represents the value of nodes $X_{i-d_L}, X_{i+d_R}$, respectively, from Definition~\ref{def:ab-influence}, 
$a^\star(b)$ is calculated by:
\[
a^\star(b) = \min_{\substack{d_L,d_R\ge 1\\ d_L+d_R-1=b}}\max_{r, l}
\bigl(\left|\log\frac{P^{d_L}_{0,l}}{P^{d_L}_{1,l}}\right|+\left|\log\frac{P^{d_R}_{0,r}}{P^{d_R}_{1,r}}\right|\bigr).
\]
We can see that $a^\star(b)$ does not depend on the absolute position $i$ (except for the boundary effects), and only depends on $d_L$ and $d_R$. Therefore, we only need to search for the pair $(d_L, d_R)$ that minimizes the above objective.

Now, we consider the condition which yields the maximum of  $\left|\log\frac{P^{d_L}_{0,l}}{P^{d_L}_{1,l}}\right|$; the right-side term (with $r$) can be analyzed in the same way due to symmetry.
Will the maximum occur when $l=0$ or $l=1$? 
If $l=0$ yields the maximum, then we must have:
\begin{align*}
     \Bigg|\log\frac{P^{d_L}_{0,0}}{P^{d_L}_{1,0}}\Bigg| &> \Bigg|\log\frac{P^{d_L}_{0,1}}{P^{d_L}_{1,1}}\Bigg| \\
 \Bigg|\log \frac{\pi_0+\lambda^{d_L}(1-\pi_0)}{\pi_0-\lambda^{d_L}\pi_0}\Bigg| &> \Bigg|\log \frac{\pi_1+\lambda^{d_L}(1-\pi_1)}{\pi_1-\lambda^{d_L}\pi_1}\Bigg|,
\end{align*}
which simplifies to the condition $q > p$. Therefore, when $q > p$, the choice $l=0$ maximizes the left log-ratio term; when $q\leq p$, the choice $l=1$ maximizes it. 
The choice for $r$ is the same as $l$ depending on the situation.
Next, find the optimal $d_L, d_R$. We observe that the log ratio  $\Bigg |\log \frac{\pi_0+\lambda^k(1-\pi_0)}{\pi_0-\lambda^k\pi_0}\Bigg|$ and $\Bigg|\log \frac{\pi_1+\lambda^k(1-\pi_1)}{\pi_1-\lambda^k\pi_1}\Bigg|$ are convex functions with respect to $k$. From midpoint convexity, the optimal $d_L, d_R$ must be,
\[
d_L=\lfloor (b+1)/2\rfloor, d_R=\lceil (b+1)/2\rceil. 
\]
\end{proofE}

\begin{figure}[t]
    \centering
    \includegraphics[width=0.32\linewidth]{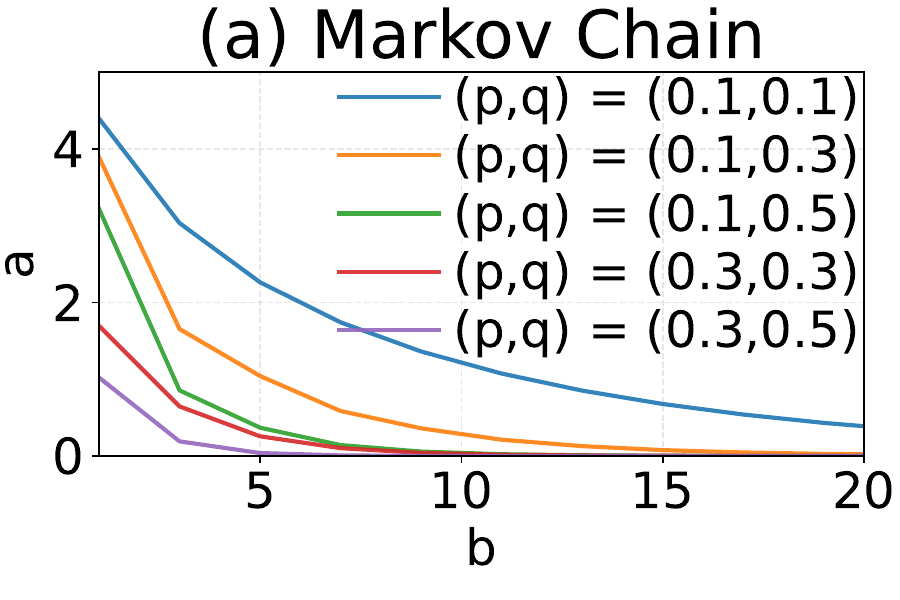}
    \includegraphics[width=0.32\linewidth]{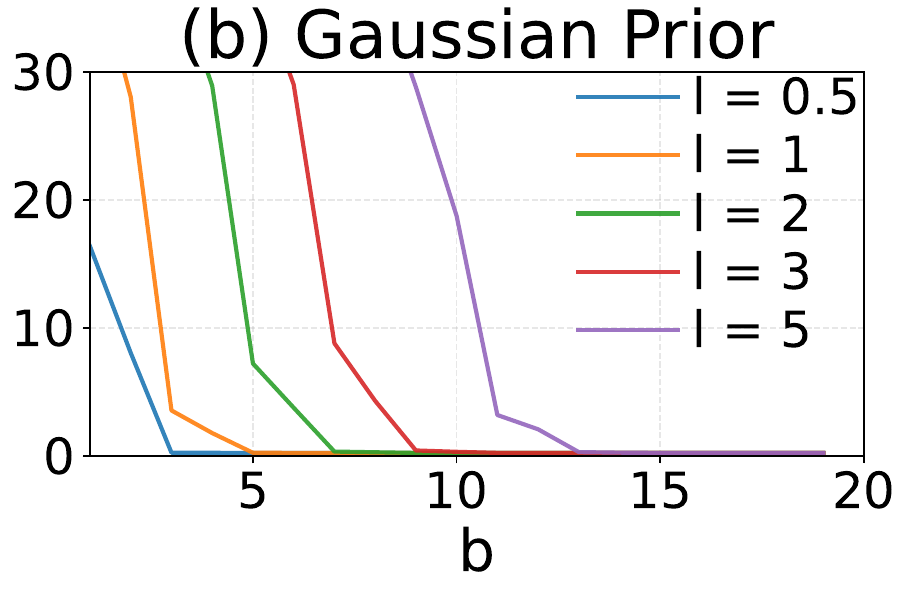}
    \includegraphics[width=0.32\linewidth]{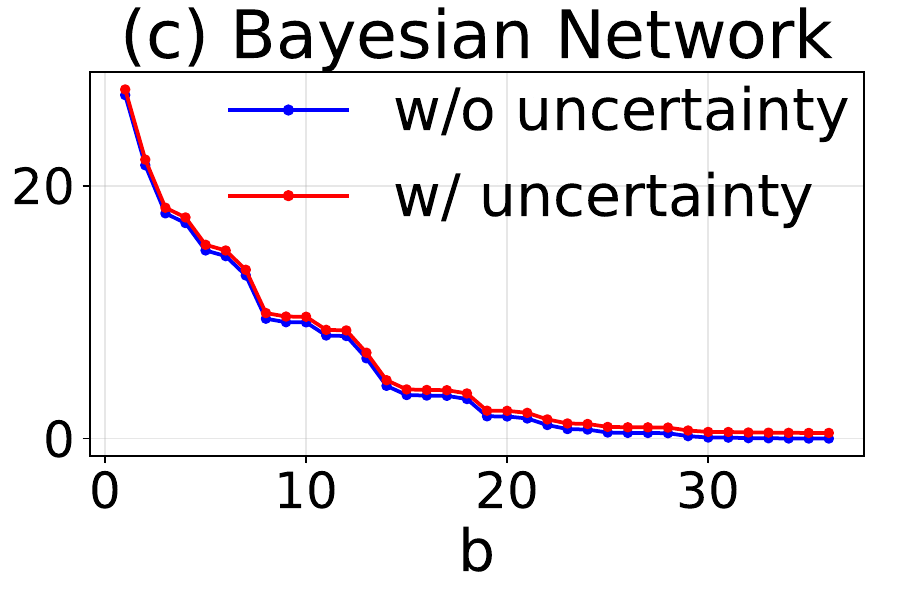}
    \caption{$\ab$-influence curves under different priors: (a) binary Markov chains and (b) multivariate Gaussian priors. (c): $a(b)$ influence curve for ALARM network. The blue curve is calculated from the exact prior, and the red curve is an upper bound by assuming 20\% prior uncertainty.
    }
    \label{fig:ab-curve}
\end{figure}


Figure~\ref{fig:ab-curve}(a) plots the resulting $\ab$-influence curves for different values of $p$ and $q$.
When $p$ and $q$ are close to $1/2$ (i.e., the chain exhibits weak temporal correlation), the resulting $\ab$-curve lies closer to the origin and quickly decays---indicating that the secret does not propagate across nodes much.
As $p$ or $q$ decreases towards 0 (or increases towards 1, which we omit due to symmetry), the chain exhibits stronger temporal correlation, and the $\ab$-influence curve also spreads out over $b$.

\textbf{\emph{Multivariate Gaussian.}}
Next, we illustrate how we can derive the $\ab$-influence curve when the prior distribution $\prior$ is a multivariate Gaussian.
While a closed-form solution is not possible, we can still efficiently calculate the optimal $\ab$-influence.

\begin{example}\label{ex:gaussian}
Consider a single-record continuous dataset represented by a vector $\mathbf{x}\in\mathbb{R}^n$.
Throughout this example, the prior distribution is a truncated multivariate Gaussian,
\[
\mathbf{x}\sim \mathcal{N}(0,\Sigma)\  \ \mathbf{x}\in[-\gamma,\gamma]^n,
\]
where $\gamma>0$ bounds the support.
The covariance matrix $\Sigma$ is:
\[
\Sigma_{jk} \;=\; \exp\!\left(-\frac{(j-k)^2}{\ell}\right),\qquad j,k\in[n].
\]
The secret is defined as \emph{region events} on each coordinate.
Specifically, the secret is whether each coordinate $x_i$'s value is in a certain range:
\[
\secret_i(r)\ =\ \{x_i\in[r,r+\delta],\qquad r\in[-\gamma,\gamma-\delta]\}.
\]
A secret pair on $i$ is given by two disjoint region events $\bigl(\secret_i(r),\,\secret_i(r')\bigr)$, where
$[r,r+\delta]\cap[r',r'+\delta]=\emptyset$.

\end{example}

Among all the entries, the middle coordinate $x_{i^\star}$ has the greatest information leakage, where $i^\star=\lceil n/2\rceil$. Therefore, we only need to consider secrets defined on $x_{i^\star}$, $s(r)\equiv\{x_{i^{\star}}\in[r,r+\delta]\}$. For a fixed $b$, the optimal low-influence index set $\lowinf^{\star}$ with size $n-b$ is the set of indices with the farthest distance to $i^\star$. Under $\lowinf^\star$, we calculate $a^\star(b)=\max_{s(r), s(r'), x_\lowinf} \log\frac{\Pr[x_\lowinf|x_{i^{\star}}\in [r,r+\delta]]}{\Pr[x_\lowinf|x_{i^{\star}}\in [r',r'+\delta]]}$, which is the optimal (tight) $\ab$. Although $\mathbf{x}_{\lowinf}$ is high-dimensional, we have:
\[
\Pr(\mathbf{x}_{\lowinf}\mid x_{i^\star})
\propto
\Pr(x_{i^\star}\mid \mathbf{x}_{\lowinf})/\Pr(x_{i^\star}).
\]
The conditional distribution of $\Pr(x_{i^\star}\mid \mathbf{x}_{\lowinf})$ is
a one-dimensional Gaussian distribution, denoted as $\mathcal{N}(\mu_{i^\star,\lowinf},\,\sigma^2_{i^\star,\lowinf})$.
As $\sigma^2_{i^\star,\lowinf}$ does not depend on $\mathbf{x}_{\lowinf}$, for fixed $\lowinf$, the maximization over $\mathbf{x}_{\lowinf}$ reduces to sweeping over $\mu_{i^\star,\lowinf}$.

\begin{theorem}\label{thm:gaussian-gp-example}
Consider Example~\ref{ex:gaussian} where $\mathbf{x}\sim\mathcal{N}(0,\Sigma)$ is truncated to $[-\gamma,\gamma]^n$ and $\Sigma$ is induced by the Gaussian-process kernel $\Sigma_{jk}=\exp\!\bigl(-(j-k)^2/\ell\bigr)$.
Fix a region length $\delta<<\gamma$ and define region secrets as $\secret_i(r)\equiv\{x_i\in[r,r+\delta]\}, r\in[-\gamma, \gamma-\delta]$ on any coordinate $i$. The effective set of secret pairs to search for is defined on the center coordinate $x_{i^\star}$ with $i^\star=\lceil n/2\rceil$.
Moreover, for each fixed $b$:
(1) the optimal low-influence index set $\lowinf$ with size $n-b$ is the set of indices with the farthest distance to $i^\star$, and 
(2) the max influence of $\mathbf{x}_{\lowinf}$ over $x_{i^\star}$ reduces to a one-dimensional sweep over the mean on the conditional distribution $x_{i^\star}|\mathbf{x}_\lowinf$. 
\end{theorem}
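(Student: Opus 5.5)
The plan is to prove part (2) first, since it is a direct computation, and then use the resulting one-dimensional formula to compare index sets for part (1) and coordinates for the center claim.

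\textbf{Part (2): reduction to a sweep over the mean.} Fix $\lowinf$ and a coordinate $i$. By Bayes' rule,
\[
\Pr(\mathbf{x}_{\lowinf}\mid x_i\in[r,r+\delta]) = \frac{\int_r^{r+\delta} p(\mathbf{x}_{\lowinf}\mid x_i=t)\,p(t)\,dt}{\Pr(x_i\in[r,r+\delta])} = p(\mathbf{x}_{\lowinf})\,\frac{\Pr(x_i\in[r,r+\delta]\mid \mathbf{x}_{\lowinf})}{\Pr(x_i\in[r,r+\delta])}.
\]
In the ratio defining $a^\dagger$ the factor $p(\mathbf{x}_{\lowinf})$ cancels. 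What remains depends on $\mathbf{x}_{\lowinf}$ only through the law of $x_i\mid\mathbf{x}_{\lowinf}$.

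For the untruncated Gaussian this law is $\mathcal{N}(\mu_{i,\lowinf},\sigma^2_{i,\lowinf})$, with
\[
\mu_{i,\lowinf}=c^\top\mathbf{x}_{\lowinf},\qquad c=\Sigma_{\lowinf\lowinf}^{-1}\Sigma_{\lowinf i},\qquad \sigma^2_{i,\lowinf}=\Sigma_{ii}-\Sigma_{i\lowinf}\Sigma_{\lowinf\lowinf}^{-1}\Sigma_{\lowinf i}.
\]
The variance does not depend on $\mathbf{x}_{\lowinf}$.

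Truncation to the box requires care. The conditional of $x_i$ given $\mathbf{x}_{\lowinf}$ is no longer exactly this Gaussian restricted to $[-\gamma,\gamma]$, because the remaining high-influence coordinates are also truncated. I would first verify the claim in the regime the statement targets ($\delta\ll\gamma$ and secrets away from the boundary), where the truncation acts as a normalizing constant that is approximately independent of $\mathbf{x}_{\lowinf}$. If the reduction then holds only up to a bounded correction, I would state that correction explicitly.

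Granting the Gaussian form, the objective is
\[
\log\frac{\Phi\big(\tfrac{r+\delta-\mu}{\sigma}\big)-\Phi\big(\tfrac{r-\mu}{\sigma}\big)}{\Phi\big(\tfrac{r'+\delta-\mu}{\sigma}\big)-\Phi\big(\tfrac{r'-\mu}{\sigma}\big)}-\log\frac{\Pr(s(r))}{\Pr(s(r'))}.
\]
Here $\mu$ ranges over $\{c^\top\mathbf{x}_{\lowinf}:\mathbf{x}_{\lowinf}\in[-\gamma,\gamma]^{n-b}\}=[-\gamma\|c\|_1,\gamma\|c\|_1]$. So the supremum over $\mathbf{x}_{\lowinf}$ becomes a sweep over $\mu$ in this interval, followed by a maximization over $(r,r')$. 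This proves (2).

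\textbf{Part (1) and the center claim.} The leakage is governed by two quantities, $\sigma^2_{i,\lowinf}$ and $\|c\|_1$. A smaller conditional variance and a wider range of means both make the ratio larger. I would argue part (1) by an exchange argument: if $\lowinf$ contains an index $j$ that is closer to $i^\star$ than some index $j'\notin\lowinf$, swap $j$ for $j'$ and show that $\sigma^2$ does not decrease and $\|c\|_1$ does not increase.

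The variance statement is the natural target. Using the kernel's monotone decay in $|j-k|$, I would express $\sigma^2$ through partial correlations and try to show that moving a conditioning index farther away reduces the explained variance.

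For the center claim I would use stationarity of the kernel. For a non-central coordinate, the relevant configuration has neighbors on one side only and, after reflection or translation, compares to the center's configuration, where the closest $b$ neighbors are split across both sides. Comparing the farthest-set quantities then gives the claim.

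\textbf{Main obstacle.} I expect the monotonicity in the exchange argument to be the hardest step. Conditional variances of Gaussian processes are not monotone in conditioning sets in general, and the squared-exponential kernel can produce screening effects that undermine a naive swap argument. If a clean proof fails, I would fall back on showing monotonicity for the specific kernel via the near-Toeplitz, diagonally dominant structure of $\Sigma$ when $\ell$ is small, and confirm the remaining cases by exact numerical evaluation of the one-dimensional formula from part (2).
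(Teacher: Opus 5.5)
Your part (2) is essentially the paper's argument. You use Bayes' rule, cancel $p(\mathbf{x}_{\lowinf})$, and observe that $x_i\mid\mathbf{x}_{\lowinf}$ is Gaussian with variance independent of $\mathbf{x}_{\lowinf}$, so the supremum becomes a sweep over $\mu_{i,\lowinf}$. Your explicit range $[-\gamma\|c\|_1,\gamma\|c\|_1]$ and your truncation caveat go beyond the paper, which ignores truncation. To sharpen the caveat: the global normalizer cancels exactly in the ratio. What breaks exactness for $b\ge 2$ is the factor $\Pr(\mathbf{x}_{\highinf\setminus\{i\}}\in[-\gamma,\gamma]^{b-1}\mid x_i,\mathbf{x}_{\lowinf})$, which depends on $x_i$ and on $\mathbf{x}_{\lowinf}$ through more than $\mu$. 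For $b=1$ the reduction is exact.

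The genuine gap is part (1), and the paper's proof will not help you there: it only asserts that $\lowinf$ is ``optimized at the least correlated entries.'' Your exchange argument needs every swap of a nearer $j\in\lowinf$ for a farther $j'\notin\lowinf$ to leave $\sigma^2_{i^\star,\lowinf}$ non-decreasing \emph{and} $\|c\|_1$ non-increasing. The second half is false for this kernel. Take $n-b=2$ and $\lowinf=\{i^\star+u,i^\star+v\}$ with $0<u<v$. As $\ell\to\infty$ the kriging weights tend to the linear-extrapolation weights $\bigl(v/(v-u),\,-u/(v-u)\bigr)$, so $\|c\|_1\to(u+v)/(v-u)$. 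Replacing offsets $\{1,2\}$ by $\{2,3\}$ therefore raises $\|c\|_1$ from $3$ to $5$. Numerically, already at $\ell=20$ it goes from about $2.72$ to $3.71$, while $\sigma^2$ rises from about $0.017$ to $0.12$.

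The two effects pull in opposite directions, so any swap argument must control them jointly. In the small-$\delta$ limit the $\mu$-dependent part of the log-ratio is $\frac{(r'-r)(r+r'-2\mu)}{2\sigma^2}$, so what matters is $(|r+r'|+2\gamma\|c\|_1)/\sigma^2$, and your plan never analyzes this combined quantity. The variance monotonicity is also left unproved. Checking the remaining cases numerically cannot establish a statement quantified over all $n,\ell,b$.

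The center claim has the same hole. Take any coordinate at least about $b/2$ from both ends. Its $b$ nearest indices split just as for $i^\star$, so after translation the two configurations share their nearest low-influence neighbors. They differ only in the lengths of the two far tails of $\lowinf$. These sets are not nested, so neither reflection nor the fact that conditioning on a superset lowers the variance compares them. You would need an actual argument that balanced tails maximize leakage, plus part (1), or at least an upper bound, for non-central coordinates. The paper also asserts this claim only in the main text, without proof.
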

\begin{proofE}
    Given any entry $x_i$ and fixing $b$, 
the information leakage of $x_i$ through $\lowinf$ is quantified by the maximum influence of $\mathbf{x}_{\lowinf}$ on $x_i$, which works on the probability distribution 
$\Pr(\mathbf{x}_{\lowinf} \mid x_i \in [r_i,r_i+\delta])$, a high-dimensional Gaussian distribution. 
To evaluate this quantity, we equivalently work on
$\Pr(x_i \in [r_i,r_i+\delta] \mid \mathbf{x}_{\lowinf})\Pr( \mathbf{x}_{\lowinf})/\Pr(x_{i}\in [r_i,r_i+\delta])$. The conditional distribution of $x_i$ given $\mathbf{x}_{\lowinf}$ is one-dimensional Gaussian,
$Normal(\mu_{i,\lowinf}, \sigma_{i,\lowinf}^2)$,
where $\mu_{i,\lowinf}$ denotes the conditional mean of $x_i$ induced by the observed values $\mathbf{x}_{\lowinf}$, and $\sigma_{i,\lowinf}^2$ is the corresponding conditional variance.
We have, 
\[
\mu_{i, \lowinf} = \Sigma_{i,\lowinf}\Sigma_{\lowinf,\lowinf}^{-1}\mathbf{x}_{\lowinf}, \quad
\sigma_{i, \lowinf}^2 = \Sigma_{ii} - \Sigma_{i,\lowinf}\Sigma_{\lowinf,\lowinf}^{-1}\Sigma_{\lowinf,i}.
\]
We can see that $\mu_{i, \lowinf}$ is a variable that changes with $x_\lowinf$, but $\sigma_{i, \lowinf}$ is a constant. Therefore, instead of sweeping all possible $x_\lowinf$, we can instead search for one-dimension $\mu_{i, \lowinf}$ to obtain $a^\star_i(b)$:

\begin{align*}
a^\star_i(b) &= \max_{r_i, r_i'} \inf_{\lowinf}\sup_{\widetilde{x}_\lowinf}\log \frac{\Pr(\mathbf{x}_\lowinf = \widetilde{x}_\lowinf | x_i \in [r_i,r_i+\delta])}{\Pr(\mathbf{x}_\lowinf = \widetilde{x}_\lowinf | x_i \in [r_i',r_i'+\delta])}\\
&= \max_{r_i, r_i'}  \inf_{\lowinf} \sup_{\widetilde{x}_{i, \lowinf}}\log \frac{\Pr(x_i \in [r_i,r_i+\delta]\mid \mathbf{x}_\lowinf = \widetilde{x}_\lowinf)}{\Pr(x_i \in [r_i',r_i'+\delta]\mid \mathbf{x}_\lowinf = \widetilde{x}_\lowinf)} \\
&\phantom{= \max_{r_i, r_i'}  \inf_{\lowinf} \sup_{\widetilde{x}_{i, \lowinf}}\log} - \log \frac{\Pr(x_i \in [r_i,r_i+\delta])}{\Pr(x_i \in [r_i',r_i'+\delta])} \\
&= \max_{r_i, r_i'}  \inf_{\lowinf} \sup_{\mu_{i, \lowinf}}\log \frac{\Pr(x_i \in [r_i,r_i+\delta]\mid \mu_{i, \lowinf})}{\Pr(x_i \in [r_i',r_i'+\delta]\mid \mu_{i, \lowinf})} \\
&\phantom{= \max_{r_i, r_i'}  \inf_{\lowinf} \sup_{\widetilde{x}_{i, \lowinf}}\log} - \log \frac{\Pr(x_i \in [r_i,r_i+\delta])}{\Pr(x_i \in [r_i',r_i'+\delta])}.
\end{align*}
The probability of the secret event $x_i \in [\alpha,\beta]$ can be expressed as
\begin{align*}
   & \Pr(x_i \in [\alpha,\beta]\mid \mu_{i, \lowinf}) 
= \Phi\!\left(\frac{\beta-\mu_{i, \lowinf}}{\sigma_{i, \lowinf}}\right)
  - \Phi\!\left(\frac{\alpha-\mu_{i, \lowinf}}{\sigma_{i, \lowinf}}\right),\\
 & \Pr(x_i \in [\alpha,\beta]) 
= \Phi\!\left(\frac{\beta-\mu_i}{\sigma_{i}}\right)
  - \Phi\!\left(\frac{\alpha-\mu_i}{\sigma_{i}}\right),
\end{align*}
where $\Phi$ denotes the standard normal CDF. 

Under the Gaussian Process Kernel in example~\ref{ex:gaussian}, we can derive a much more efficient approach to calculate the $\ab$-influence curve. In this case, for each $x_i$, the index set $\lowinf$ is optimized at the least correlated entries, i.e., the $b$ smallest entries in magnitude of the $i$-th row of the covariance matrix.
\end{proofE}

We report $\ab$-curve for decay factor $\ell=\{0.5, 1,2,3,5\}$, and pick $\gamma=5, \delta=0.1$. 
The results are shown in Figure~\ref{fig:ab-curve}(b). As $\ell$ grows larger, the correlation among entries becomes stronger, and the corresponding $\ab$-curve is farther away from the origin. 

\begin{table*}[h]
\small
\setlength{\tabcolsep}{3.5pt} 
\begin{tabular}{ccccccccccccc}
 & \multicolumn{3}{c}{\textbf{\ourmech} w/o Uncertainty} & \multicolumn{3}{c}{MQM w/o Uncertainty} & \multicolumn{3}{c}{\textbf{\ourmech} w/ Uncertainty} & \multicolumn{3}{c}{MQM w/ Uncertainty} \\ \cmidrule(r){2-4} \cmidrule(r){5-7} \cmidrule(r){8-10} \cmidrule(r){11-13}
 $\epsPuffer$   & Acc@1 & Acc@2 & Acc@3 & Acc@1 & Acc@2 & Acc@3 & Acc@1 & Acc@2 & Acc@3 & Acc@1 & Acc@2 & Acc@3 \\ \hline
0.5 & \textbf{79.0\%} & \textbf{63.1\%} & \textbf{58.5\%} & 36.3\% & 26.7\% & 20.83\% & \textbf{42.4\%} & \textbf{34.0\%} & \textbf{28.4\%} & 16.2\% & 11.1\% & 8.1\% \\
1 & \textbf{88.5\%} & \textbf{76.7\%} & \textbf{72.5\%} & 46.4\% & 36.3\% & 30.6\% & \textbf{81.8\%} & \textbf{66.9\%} & \textbf{62.4\%} & 38.6\% & 28.9\% & 23.0\% \\ 
2 & \textbf{94.0\%} & \textbf{85.7\%} & \textbf{82.8\%} & 60.8\% & 47.5\% & 43.0\% & \textbf{92.8\%} & \textbf{82.9\%} & \textbf{79.6\%} & 55.2\% & 43.6\% & 38.8\% \\ 
3 & \textbf{95.7\%} & \textbf{89.6\%} & \textbf{87.7\%} & 72.1\% & 55.3\% & 50.8\% & \textbf{95.1\%} & \textbf{88.3\%} & \textbf{86.1\%} & 68.1\% & 52.3\% & 47.9\% \\ 
4 & \textbf{96.7\%} & \textbf{91.7\%} & \textbf{90.4\%} & 79.1\% & 61.7\% & 57.0\% & \textbf{96.4\%} & \textbf{91.0\%} & \textbf{89.5\%} & 76.7\% & 59.2\% & 54.6\% \\ 
5 & \textbf{97.3\%} & \textbf{93.0\%} & \textbf{92.1\%} & 83.4\% & 66.9\% & 62.0\% & \textbf{97.1\%} & \textbf{92.6\%} & \textbf{91.5\%} & 81.9\% & 65.0\% & 60.1\% \\
\end{tabular}
\caption{Acc@k for evaluating the Top-3 popular visited locations using the Foursquare check-in dataset, under various $\epsPuffer$. The metric is higher the better, and the best mechanism for each row is highlighted in bold.
\rebuttal{Our new mechanism (\ourmech) outperforms both MQM and group DP-based approaches in all privacy parameters.}{
We also report the numbers when 20\% uncertainty is set around the prior (``w/ Uncertainty'').}
Our new mechanism (\ourmech) outperforms MQM in all privacy parameters.
}
\label{tab:pos-acc}
\end{table*}

\begin{table*}[h]
    \centering
    \small
    \setlength{\tabcolsep}{4pt} 
    \begin{tabular}{ccccccccccccc}
 & \multicolumn{3}{c}{\textbf{\ourmech} w/o Uncertainty} & \multicolumn{3}{c}{MQM w/o Uncertainty} & \multicolumn{3}{c}{\textbf{\ourmech} w/ Uncertainty} & \multicolumn{3}{c}{MQM w/ Uncertainty} \\
 \cmidrule(r){2-4} \cmidrule(r){5-7} \cmidrule(r){8-10} \cmidrule(r){11-13}
 $\epsPuffer$ & HR (↑) & NDCG (↑) & $\ell_1$ (↓) & HR (↑) & NDCG (↑) & $\ell_1$ (↓) & HR (↑) & NDCG (↑) & $\ell_1$ (↓) & HR (↑) & NDCG (↑) & $\ell_1$ (↓) \\  
 \midrule
0.5 & \textbf{78.7\%} & \textbf{0.937} & \textbf{765} & 37.6\% & 0.726 & 5500 & \textbf{45.6\%} & \textbf{0.763} & \textbf{3838} & 18.6\% & 0.606 & 17842 \\
1   & \textbf{89.3\%} & \textbf{0.977} & \textbf{295} & 49.8\% & 0.791 & 3025 & \textbf{81.7\%} & \textbf{0.952} & \textbf{634} & 40.4\% & 0.742 & 4746 \\
2   & \textbf{94.4\%} & \textbf{0.987} & \textbf{105} & 64.3\% & 0.868 & 1554 & \textbf{93.1\%} & \textbf{0.986} & \textbf{147} & 59.3\% & 0.841 & 1976 \\
3   & \textbf{96.3\%} & \textbf{0.993} & \textbf{69}  & 73.3\% & 0.917 & 913  & \textbf{95.5\%} & \textbf{0.991} & \textbf{83}  & 70.1\% & 0.900 & 1125 \\
4   & \textbf{97.2\%} & \textbf{0.994} & \textbf{60}  & 79.2\% & 0.946 & 571  & \textbf{96.9\%} & \textbf{0.994} & \textbf{61}  & 77.1\% & 0.933 & 684 \\
5   & \textbf{97.8\%} & \textbf{0.994} & \textbf{56}  & 83.4\% & 0.964 & 386  & \textbf{97.6\%} & \textbf{0.994} & \textbf{57} & 81.9\% & 0.957 & 449 \\
    \end{tabular}
    \caption{
    Hit Rate@K (HR), NDCG@K (NDCG), and $\ell_1$ Count Error ($\ell_1$) for evaluating the Top-3 popular visited locations using the Foursquare check-in dataset, under various $\epsPuffer$.
    Hit Rate@K and NDCG@K are higher the better (indicated by the ↑ symbol), and $\ell_1$ Count Error is lower the better (indicated by the ↓ symbol). The best among the mechanisms are highlighted in bold.
    \rebuttal{Our new mechanism (\ourmech) outperforms both MQM and group DP-based approaches in all privacy parameters and all metrics.}{
We also report the numbers when 20\% uncertainty is set around the prior (``w/ Uncertainty'').}
    Our new mechanism (\ourmech) outperforms MQM in all privacy parameters and all metrics.
    }
    \label{tab:hit-rate-ndcg}
\end{table*}

\textbf{\emph{Bayesian Network with Uncertain Prior.}}
Finally, we study a realistic use-case having a general Bayesian network as the prior, with uncertainty added as in Section \ref{sec:uncertainty}.
ALARM~\cite{beinlich1989alarm} is a benchmark for patient monitoring.
It uses an expert-constructed Bayesian network, which models the medical knowledge of dependencies between diagnostic hypotheses, intermediate physiological states, and observable monitoring variables through 37 nodes, 46 directed edges, 16 finding variables, and 13 intermediate variables.
All nodes are categorical variables representing 
clinical conditions, physiological states, or measurements. 
In our setup, the secret is the value of an individual node in a patient record.

Figure~\ref{fig:ab-curve}(c) plots the calculated $\ab$-influence curve (blue line), using the upper-bound estimation method from Section~\ref{sec:curve_estimation} (there are too many nodes to calculate the exact curve). 
Additionally, we calculated a curve for the case of prior uncertainty using the techniques of Section \ref{sec:uncertainty} (red line).
Specifically, we used 20\% prior uncertainty, i.e., $u(\dataquilt, \secret) = 1.2p(\dataquilt, \secret), l(\dataquilt, \secret) = 0.8p(\dataquilt, \secret)$.
Even with this moderate amount of fluctuation, the $\ab$-influence curve does not change much to account for potential uncertainty.

\section{Experiments}\label{sec:exp}

Our experiments highlight the benefit of our approach: (1) they rely on composition, (2) they take into account uncertainty about data-generating priors, and (3) they demonstrate that the freedom of adapting existing DP algorithms often provides a quick way of improving over hand-crafted Pufferfish mechanisms.

\subsection{Experimental Setup}

\subsubsection{Datasets}
We evaluate the mechanisms with two representative datasets where strong correlations exist between datapoints.

\textbf{Foursquare check-in dataset}~\cite{yang2016participatory, yang2015nationtelescope} contains users' location data with a corresponding timestamp across 77 countries.
Each location is associated with a semantic label, such as ``Bar'', ``Hotel'', or ``Restaurant''. We merged similar labels (e.g., ``Cocktail Bar'' with ``Bar'') and only chose the 77 most common labels worldwide, mapping all other locations as ``Other''.
Each datapoint of the resulting dataset is location information (one of the 77 labels or ``Other'') of a particular user at a particular time.
The rules for merging labels are in \shortlong{the Appendix of~\cite{nfcarxiv}}{Appendix~\ref{appendix:labelmap}}. 
Clearly, locations of a user within a short time period are expected to be highly correlated.

\textbf{Capture24 dataset}~\cite{chan2021capture,chan2024capture} contains 151 participants' activity data monitored across 24 hours.
Participants are grouped by age (18–29, 30–37, 38–52, 53+). The dataset annotated activities in multiple ways, and we use the \texttt{WillettsMET2018} annotation, where participants' activities are labelled into 11 activity categories.
Again, activities of a participant within a nearby time period are expected to be highly correlated.

\subsubsection{Query of Interest}

As a demonstration, we study \emph{noisy Top-$K$} queries.
In particular, for the Foursquare check-in dataset, the query we evaluate is: \emph{``Within each country, what are the Top-3 most frequently visited locations?''}.
For the Capture24 dataset. the query we evaluate is: \emph{``Within each age group, what are the Top-3 most frequent activities?''}.
These are deliberately chosen to evaluate the types of queries that prior work~\cite{song2017pufferfish} did not consider. For the queries that prior works studied (e.g., querying the average value~\cite{song2017pufferfish}), mechanisms from prior works can be used. 

\begin{figure}[h]
    \centering
    \includegraphics[width=\linewidth]{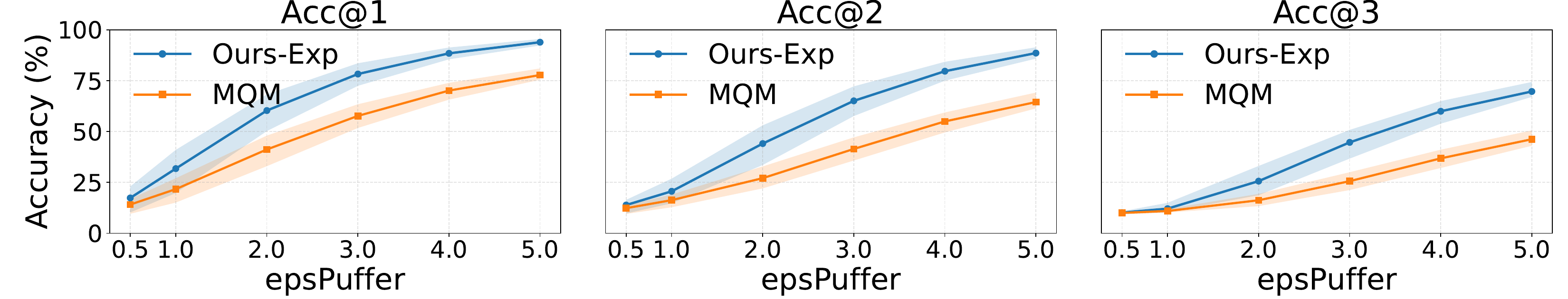}

    \caption{Acc@k for evaluating the Top-3 most frequent activities for each participant in the Capture24 dataset under various $\epsPuffer$. \ourmech outperforms MQM across all privacy parameters. Shaded regions indicate the utility range when assuming 20\% uncertainty around the prior.}
    \label{fig:capture}
\end{figure}

\subsubsection{Evaluated Mechanisms}

We compare an approach (\textbf{\ourmech}) based on our framework to MQM \cite{song2017pufferfish}.

\textbf{\ourmech} uses our framework's flexibility in choosing appropriate mechanisms by adapting the exponential mechanism \cite{EM} of differential privacy and running it $k$ times. The item returned in the first run is treated as the top item. The second run samples among the remaining items to produce the estimated second-ranked item, etc. Given a target Pufferfish privacy parameter $\epsPuffer$, the privacy budgets are allocated evenly among the $k$ runs so that the overall parameter via composition is $\epsPuffer$. 

\textbf{MQM} is a Pufferfish mechanism proposed by prior work~\cite{song2017pufferfish}. It adapts the Laplace mechanism of differential privacy to achieve Pufferfish privacy with composition.
Noise is added to the $m$ categories and the top-$k$ noisy categories are selected.

\subsubsection{Prior Distribution Construction} Using MQM or our method (\ourmech) requires specifying a  prior $\prior$.
Following prior works~\cite{song2017pufferfish}, we assume the temporal data are generated by a first-order Markov chain, and use the data to fit a Markov model with a stationary transition matrix.
For the Foursquare check-in dataset, each user's location is a state, and a location change is considered a state transition.
We truncate the dataset to only leave the check-in data for the first 500 days.
We built a $78\times 78$ transition matrix using the first 400 out of 500 days of data, and used the remaining 100 days for testing.
For the Capture24 dataset, each participant's activity is a state, and activity change is considered a state transition.
We hold out 8 participants from each age group for testing and use the rest to fit the $11\times11$ transition matrix.
Note that MQM requires the prior $\prior$ to be a Markov chain, while our general approach from Section~\ref{sec:sufficient} works with an arbitrary prior.
We additionally ran experiments with 20\% uncertainty added around the prior using the results of Section \ref{sec:uncertainty} (i.e., $u(\dataquilt, \secret) = 1.2p(\dataquilt, \secret), l(\dataquilt, \secret) = 0.8p(\dataquilt, \secret)$.). We denote results with uncertainty as ``w/ Uncertainty'' and results with no uncertainty as ``w/o Uncertainty''. 

\subsubsection{Utility Metrics}
We evaluate with the following metrics:

\textbf{Acc@k} measures the accuracy of predicting the $k$-th popular category correctly. The exact ranking (i.e., $k$) must be predicted correctly (e.g., if the most popular location is predicted as second-most popular, it is regarded as incorrect).

\textbf{Hit Rate@K} measures the precision of the correct Top-$K$ categories appearing in the predicted Top-$K$. The order does not matter, as long as the correct category appears in the predicted Top-$K$.

\textbf{NDCG@K}, or Normalized Discounted Cumulative Gain, is a frequently used metric in ranking systems~\cite{jarvelin2002cumulated}.
NDCG accounts for the order of the answer and rewards putting higher-ranked categories near the top, even if the exact rank was incorrect.
Precisely, NDCG is calculated by:
    \[
   \text{DCG@}K=\sum_{k=1}^{K}\frac{\text{rel}(\hat{\pi}_k)}{\log_2(k+1)},\quad
    \text{NDCG@}K=\frac{\text{DCG@}K}{\text{IDCG@}K},
    \]
where $\text{rel}(i)$ is the ground-truth score for item $i$ (we use true counts), and IDCG is the DCG of the ideal ranking.
By normalizing to the best possible ranking for the same query (IDCG), scores lie in $[0,1]$, and a higher score means better prediction.

\textbf{$\ell_1$ Count Error} is the sum of the absolute difference between the predicted Top-$K$'s counts and the true Top-$K$'s counts. 
We additionally included this metric because it can give us a sense of how different the actual counts are when misprediction occurs. For example, if the most and second-most popular categories have similar counts, this metric does not penalize swapping their ranking in prediction; however, if their counts differ significantly, getting the order wrong would significantly increase the error.
Specifically, the error is calculated by $\text{L1} \;=\; \sum_{k=1}^K \big| \tilde{c}_k - c_k \big|$, 
where $\tilde{c}_k$ is the count of the category that the noisy mechanism places at rank $k$, and $c_k$ is the count of the category that truly belongs at rank $k$. 

\textbf{Majority} We additionally show another example of what the accuracy would be if we use the Top-1/2/3 labels from the train data as the prediction to evaluate the complexity of the question. The Majority only achieved 18.61\% and 16.67\%  accuracy averaged across 3 positions. 
A full overview of the datasets, prior constructions, and implementation choices is provided in \shortlong{Appendix in ~\cite{nfcarxiv}}{Appendix~\ref{appendix:exp}}.

\subsection{Results: Foursquare Check-in Dataset}

Table~\ref{tab:pos-acc} (Acc@k) and Table~\ref{tab:hit-rate-ndcg} (Hit Rate@K, NDCG@K, $\ell_1$ Count Error) show the utility evaluation results for MQM and our new mechanism (\ourmech).
Both Table~\ref{tab:pos-acc} and Table~\ref{tab:hit-rate-ndcg} show that, across all privacy parameters and metrics we studied, \ourmech outperformed MQM both with and without prior uncertainty. 
\ourmech outperforms the Laplace-based MQM method because it uses the exponential mechanism of differential privacy, which is known to perform better for Top-$k$ queries~\cite{dwork2006calibrating}, while using our Theorem~\ref{the:markov-trick} to translate it into a composable Pufferfish mechanism.
Also, incorporating 20\% uncertainty into the computation of the worst-case leakage $\ab$-influence curve only mildly affects utility in most cases. Except for $\epsPuffer=0.5$, the utility decrease is modest, suggesting that our framework can preserve useful accuracy even when the prior is only approximately specified.

It is worth stating explicitly that the utility improvement in our experiments comes from our framework enabling the use of alternative DP mechanisms that are better matched to the query (the exponential mechanism for Top-$K$ queries).

\subsection{Results: Capture24 Dataset}

Figure ~\ref{fig:capture} shows the utility of \ourmech for various privacy parameters and metrics for the Capture24 dataset.
Again, in all the tested setups, \ourmech outperformed or showed on-par utility compared to the baselines (MQM; \ourmech also outperformed group DP, which is omitted). Note that this dataset only has 11 activities, so 9--10\% accuracy is almost random guessing.

\section{Conclusions and Future Work}\label{sec:conc}
In this paper, we study how to add composition properties to the Pufferfish framework.
We analyzed privacy collapses under composition and provided necessary
 and sufficient conditions for linear composition. We conclude that the constraints on mechanisms
required by the $\epsPuffer$-Pufferfish framework should be augmented with $\epsDP$-differential privacy 
constraints. We introduced the $\ab$-influence curve to translate between those two
privacy parameters.
Future work includes algorithm design under various classes of priors and extending the
necessary and sufficient conditions for privacy definitions like Renyi-Pufferfish.

\begin{acks}
This work was supported by the US National Science Foundation under Awards CNS-2349610 and CNS-2317232.
Any opinions, findings, and conclusions or recommendations expressed in this material are those of the author(s) and do not necessarily reflect the views of the National Science Foundation.
\end{acks}


\bibliographystyle{ACM-Reference-Format}
\bibliography{ref}

\shortlong{}{\appendix
\onecolumn
\section{Existing Posterior-Based Privacy Definition}

Table ~\ref{tab:privacy-lit-detail} summarizes existing inferential/posterior-based privacy definitions. We adopt the Pufferfish Privacy~\cite{pufferfish} framework in our analysis. 

\begin{table*}[]
\caption{A detailed overview of existing privacy definitions. We show that Pufferfish privacy can generalize to these existing privacy definitions. }
\label{tab:privacy-lit-detail}
\small
\renewcommand{\arraystretch}{1.5}
\begin{tabular}{m{0.07\linewidth}|m{0.15\linewidth}|m{0.15\linewidth}|m{0.2\linewidth}|m{0.25\linewidth}|>{\centering\arraybackslash}m{0.08\linewidth}}
\toprule
                            & Privacy Definition                                & Secret                        & Secret Pair                                                          & Prior Knowledge                 & Composition  \\ \midrule
\multirow{4}{*}{Likelihood} & Differential Privacy~\cite{dwork2006calibrating}                    & $\secret_i: \data_{i}=\alpha$ &$(\secret_i, \secret_j): \data_{i}=\alpha \text{ vs } \data_{i}=\beta$&  Each data                                record is independent& Yes \\ 
                            & Dependent Differential Privacy~\cite{DependDP}    & $\secret_i: \data_{i}=\alpha$ &$(\secret_i, \secret_j): \data_{i}=\alpha \text{ vs } \data_{i}=\beta$&  The set of distributions satisfying a specific probabilistic dependence relationship $\mathcal{R}$. & Yes\\ 
                            & Noiseless Privacy~\cite{noiseless}                & $\secret_i: \data_{i}=\alpha$ &$(\secret_i, \secret_j): \data_{i}=\alpha \text{ vs } \data_{i}=\beta$&  A specific distribution (e.g., i.i.d. Uniform or Gaussian) assumed to generate the data & Yes\\
                            & Blowfish Privacy~\cite{blowfish}                  & Any arbitrary propositional statement about tuple $i$. & Pairs of mutually exclusive secrets about the same individual & Distributions conditioned on a set of publicly known deterministic constraints $\mathcal{Q}$ & Yes \\ \midrule
\multirow{6}{*}{Posterior}  &Pufferfish Privacy~\cite{pufferfish}               & Customizable set of potential secrets $\secret$ & Customizable set of discriminative pairs $\secret_{pairs} \subseteq \secret \times \secret$ & Any customizable set of potential prior distributions $\theta$ & No \\
                            & Inferential Privacy~\cite{IP}                     & $\secret_i: \data_{i}=\alpha$ &$(\secret_i, \secret_j): \data_{i}=\alpha \text{ vs } \data_{i}=\beta$& Any customizable set of potential prior distributions $\theta$ & No \\
                            & Conditional Inference Privacy~\cite{CIP}          & A user's exact location coordinates at specific timestamps. & Any two potential locations that are within a specific geographic radius $r$. & Conditional Prior Class of distributions that model temporal dependence (smooth trajectories) & No\\
                            & Partial Knowledge Differential Privacy~\cite{PKDP}& $\secret_i: \data_{i}=\alpha$ &$(\secret_i, \secret_j): \data_{i}=\alpha \text{ vs } \data_{i}=\beta$&  A set of distributions $\Theta$ that can be factorized into independent parameters $\phi_0$ (sensitive value) and $\phi_{rest}$ (partial knowledge) & No\\
                            & Bayesian Differential Privacy~\cite{BDP}          & $\secret_i: \data_{i}=\alpha$ &$(\secret_i, \secret_j): \data_{i}=\alpha \text{ vs } \data_{i}=\beta$& A specific family of distributions where records are statistically dependent & No    \\
                            & Distribution Differential Privacy~\cite{DistDP}   & $\secret_i: \data_{i}=\alpha$ & $(\secret_i, \secret_j): \data_i = \alpha \text{ vs } \data_i = \perp$  & A specific class of distributions representing the adversary's uncertainty about the data. & No        \\ \bottomrule
\end{tabular}
\end{table*}

\section{Pufferfish Mechanisms}

Algorithm ~\ref{alg:mqm_lap} and ~\ref{alg:mqm_exp} are details for Pufferfish Laplacian Mechanism and Pufferfish Exponential Mechanism. 





\begin{algorithm}[]
\caption{Pufferfish Laplace Mechanism}
\label{alg:mqm_lap}
\begin{algorithmic}[1]
\REQUIRE Input dataset $\data \sim \Data$ and indices set $I$, a collection of $(a, b)$-influence $\mathcal{F}_{\mathrm{infl}}(\priorset, \secretset)$,  query $F$ with Lipschitz constant $L$,  Pufferfish privacy parameter $\epsPuffer$

\STATE Initialize $\epsDP \gets \infty$

\FOR{each $(a_\ell, b_\ell)$-influence pair in $\mathcal{F}_{\mathrm{infl}}(\priorset, \secretset)$}
    \IF{$a_\ell \geq \epsPuffer$}
        \STATE $\epsDP_{\ell} = \epsPuffer/|I|$
    \ELSE
        \STATE Compute $\epsDP_{\ell} = \frac{\epsPuffer - a}{b}$
    \ENDIF
    \STATE $\epsDP \gets \min(\epsDP, \epsDP_{\ell})$
\ENDFOR

\STATE Sample $\eta \sim \mathrm{Lap}(L/\epsDP)$
\RETURN $Z := F(\data) + \eta$
\end{algorithmic}
\end{algorithm}



\begin{algorithm}[]
\caption{Pufferfish Exponential Mechanism}
\label{alg:mqm_exp}
\begin{algorithmic}[1]
\REQUIRE Input dataset $\data \sim \Data$ and indices set $I$, a collection of $(a, b)$-influence $\mathcal{F}_{\mathrm{infl}}(\priorset, \secretset)$, utility function $u(\data, r)$ with Lipschitz constant $L$,  Pufferfish privacy parameter $\epsPuffer$

\STATE Initialize $\epsDP \gets \infty$

\FOR{each $(a_\ell, b_\ell)$-influence pair in $\mathcal{F}_{\mathrm{infl}}(\priorset, \secretset)$}
    \IF{$a_\ell \geq \epsPuffer$}
        \STATE $\epsDP_{\ell} = \epsPuffer/|I|$
    \ELSE
        \STATE Compute $\epsDP_{\ell} = \frac{\epsPuffer - a}{b}$
    \ENDIF
    \STATE $\epsDP \gets \min(\epsDP, \epsDP_{\ell})$
\ENDFOR

\STATE Sample $Z \in \mathcal{R}$ with probability proportional to $\exp\left( \frac{u(\data, r)\epsDP}{2 L} \right)$
\RETURN $Z$
\end{algorithmic}
\end{algorithm}

\section{Proofs}\label{sec:proofs}

\printProofs

\section{Experimental Setup}
\label{appendix:exp}

\subsection{Dataset and Preprocessing}
\textbf{User Check-in Experiment: }We conduct experiments on the \textbf{Foursquare check-in dataset}, which records user trajectories over time and space across 77 countries.  Each venue is labeled with a fine-grained category (e.g., \textit{Cocktail Bar}, \textit{Theme Park}, \textit{Hostel}).  Since the raw dataset contains more than 400 distinct location labels, we manually curated a unified taxonomy of 77 common labels shared across all countries.  All the other labels were merged into the nearest semantic commonly shared category (for example, \textit{Cocktail Bar} $\rightarrow$ \textit{Bar}), while venues that could not be confidently mapped were assigned to an additional category \textit{``other''}.  
A detailed mapping table describing the merge rules is provided in Appendix~\ref{appendix:labelmap}.
\textbf{Physical Activity Experiment: } We use the \emph{Capture24} dataset~\cite{chan2021capture,chan2024capture}, which contains wrist-worn accelerometer recordings collected in daily living from 151 participants in the Oxfordshire area (2014–2016). Each participant wore the device for roughly 24 hours. Multiple labeling schemes are available; we adopt \texttt{WillettsMET2018}, which provides 11 activity categories designed around metabolic-equivalent (MET)–related patterns, including the following activities: bicycling, gym, sitstand+activity, sitstand+lowactivity, sitting, sleep, sports, standing, vehicle, walking, walking+activity.

\subsection{Smoothing and Stability}
To ensure numerical stability and handle rare transitions, we apply additive smoothing with tolerance $\tau = 10^{-5}$:
\begin{enumerate}
    \item Add $\tau$ to every zero entry in each row of $P$.
    \item Subtract the total added mass proportionally from non-zero entries so that each row remains normalized ($\sum_j P_{ij}=1$) and all entries stay non-negative.
\end{enumerate}
This preserves stochasticity while avoiding zero-probability states during evaluation.

\paragraph{An example of smoothing transition matrix} Given a transition matrix as the following:

\begin{lstlisting}
[[0.5    0.     0.1666 0.0555 0.2777]
 [0.1428 0.3571 0.1428 0.0714 0.2857]
 [0.0361 0.0120 0.4698 0.1325 0.3493]
 [0.     0.     0.0897 0.4615 0.4487]
 [0.0125 0.0251 0.1006 0.0911 0.7704]]
\end{lstlisting}

We first add one additional row/column to include the state of ``others'':

\begin{lstlisting}
[[0.5    0.     0.1666 0.0555 0.2777 0.    ]
 [0.1428 0.3571 0.1428 0.0714 0.2857 0.    ]
 [0.0361 0.0120 0.4698 0.1325 0.3493 0.    ]
 [0.     0.     0.0897 0.4615 0.4487 0.    ]
 [0.0125 0.0251 0.1006 0.0911 0.7704 0.    ]
 [0.2    0.2    0.2    0.2    0.2    0.    ]]
\end{lstlisting}

And add smoothing to the transition matrix:

\begin{lstlisting}
[[0.49999 1e-05   0.16666 0.05555 0.27777 1e-05]
 [0.14286 0.35714 0.14286 0.07143 0.28571 1e-05]
 [0.03614 0.01205 0.46987 0.13253 0.34939 1e-05]
 [1e-05   1e-05   0.08974 0.46152 0.44870 1e-05]
 [0.01258 0.02516 0.10063 0.09119 0.77043 1e-05]
 [0.19999 0.19999 0.19999 0.19999 0.19999 1e-05]]
\end{lstlisting}

\subsection{Label Merging Rules}
\label{appendix:labelmap}

To unify over 400 raw Foursquare location labels across different countries,
we manually curated a set of 78 \textit{common labels}.
Each common label aggregates multiple semantically similar subcategories
(e.g., ``Cocktail Bar'' $\rightarrow$ ``Bar'', ``Hostel'' $\rightarrow$ ``Hotel'').
The following listing summarizes all mapping groups used in our experiments.

\begin{small}
\begin{description}
    \item[\textbf{Nightclub}] Nightclub, Jazz Club, Nightlife Spot, Other Nightlife, Rock Club, Strip Club
    \item[\textbf{Sushi Restaurant}] Sushi Restaurant
    \item[\textbf{Restaurant}] Restaurant, Afghan Restaurant, African Restaurant, American Restaurant, Arepa Restaurant, Argentinian Restaurant, Australian Restaurant, Brazilian Restaurant, Breakfast Spot, Cajun / Creole Restaurant, Caribbean Restaurant, College Cafeteria, Cuban Restaurant, Dim Sum Restaurant, Dumpling Restaurant, Eastern European Restaurant, Ethiopian Restaurant, Falafel Restaurant, Filipino Restaurant, French Restaurant, German Restaurant, Gluten-free Restaurant, Greek Restaurant, Indian Restaurant, Indonesian Restaurant, Japanese Restaurant, Korean Restaurant, Latin American Restaurant, Malaysian Restaurant, Mediterranean Restaurant, Mexican Restaurant, Middle Eastern Restaurant, Molecular Gastronomy Restaurant, Mongolian Restaurant, Moroccan Restaurant, New American Restaurant, Paella Restaurant, Peruvian Restaurant, Portuguese Restaurant, Ramen / Noodle House, Salad Place, Scandinavian Restaurant, Seafood Restaurant, Soup Place, South American Restaurant, Southern / Soul Food Restaurant, Spanish Restaurant, Steakhouse, Swiss Restaurant, Tapas Restaurant, Thai Restaurant, Turkish Restaurant, Vegetarian / Vegan Restaurant, Vietnamese Restaurant
    \item[\textbf{Coffee Shop}] Coffee Shop, Tea Room
    \item[\textbf{Bus Station}] Bus Station, Airport, Airport Gate, Airport Lounge, Airport Terminal, Airport Tram, Boat or Ferry, Bus Line, Plane, Subway, Train, Train Station
    \item[\textbf{Women's Store}] Women's Store, Men's Store
    \item[\textbf{Mall}] Mall, Farmers Market, Fish Market, Flea Market, Market, Shop \& Service, Thrift / Vintage Store
    \item[\textbf{General Travel}] General Travel, Ferry, Light Rail, Taxi, Travel \& Transport, Travel Agency
    \item[\textbf{Lounge}] Lounge, Comedy Club, Roof Deck, Travel Lounge
    \item[\textbf{Professional \& Other Places}] Professional \& Other Places, Financial or Legal Service
    \item[\textbf{Gym}] Gym, Climbing Gym
    \item[\textbf{Hardware Store}] Hardware Store
    \item[\textbf{Department Store}] Department Store
    \item[\textbf{Sporting Goods Shop}] Sporting Goods Shop
    \item[\textbf{BBQ Joint}] BBQ Joint, Wings Joint
    \item[\textbf{Mobile Phone Shop}] Mobile Phone Shop
    \item[\textbf{Bakery}] Bakery
    \item[\textbf{Plaza}] Plaza
    \item[\textbf{Government Building}] Government Building, Capitol Building, Courthouse, Embassy / Consulate, Fire Station, Military Base, Police Station, Radio Station
    \item[\textbf{Sandwich Place}] Sandwich Place, Burrito Place
    \item[\textbf{Design Studio}] Design Studio, Yoga Studio
    \item[\textbf{Fast Food Restaurant}] Fast Food Restaurant, Bagel Shop, Fish \& Chips Shop, Fried Chicken Joint, Mac \& Cheese Joint
    \item[\textbf{Park}] Park, Beach, Castle, Garden, Garden Center, Lake, Mountain, Rest Area, Sculpture Garden, Skate Park, Theme Park, Theme Park Ride / Attraction, Vineyard, Water Park
    \item[\textbf{Medical Center}] Medical Center
    \item[\textbf{City}] City, Harbor / Marina, Island, Road
    \item[\textbf{Cafe}] Cafe, Deli / Bodega, Gaming Cafe, Internet Cafe
    \item[\textbf{Electronics Store}] Electronics Store, Camera Store, Photography Lab, Record Shop, Video Game Store, Video Store
    \item[\textbf{General Entertainment}] General Entertainment, Aquarium, Arcade, Art Gallery, Art Museum, Arts \& Crafts Store, Arts \& Entertainment, Athletic \& Sport, Athletics \& Sports, Boarding House, Bowling Alley, Casino, Concert Hall, Dance Studio, Farm, Hiking Trail, Historic Site, History Museum, Hot Spring, Hunting Supply, Martial Arts Dojo, Monument / Landmark, Museum, Opera House, Performing Arts Venue, Pier, Planetarium, Public Art, River, Rock Climbing Spot, Scenic Lookout, Science Museum, Ski Lodge, Surf Spot, Tourist Information Center, Zoo
    \item[\textbf{Bookstore}] Bookstore
    \item[\textbf{Home (private)}] Home (private)
    \item[\textbf{Building}] Building, Platform
    \item[\textbf{Chinese Restaurant}] Chinese Restaurant, Asian Restaurant
    \item[\textbf{Event Space}] Event Space, Convention Center, Fair, Music Venue, Tech Startup, Voting Booth
    \item[\textbf{Food Court}] Food Court, Airport Food Court, Food, Food \& Drink Shop, Food Truck
    \item[\textbf{Pizza Place}] Pizza Place, Taco Place
    \item[\textbf{Drugstore / Pharmacy}] Drugstore / Pharmacy
    \item[\textbf{Boutique}] Boutique
    \item[\textbf{University}] University, Community College, Law School, Medical School, Music School, Nursery School
    \item[\textbf{Automotive Shop}] Automotive Shop, Bike Rental / Bike Share, Bike Shop, Car Dealership, Motorcycle Shop, Rental Car Location
    \item[\textbf{Hospital}] Hospital, Emergency Room, Eye Doctor, Veterinarian
    \item[\textbf{Office}] Office, Campaign Office, Conference Room, Lighthouse, Meeting Room, Real Estate Office
    \item[\textbf{Convenience Store}] Convenience Store, Accessories Store, Health Food Store, Kids Store, Toy / Game Store
    \item[\textbf{Spa / Massage}] Spa / Massage
    \item[\textbf{Gas Station / Garage}] Gas Station / Garage, EV Charging Station, Storage Facility
    \item[\textbf{Housing Development}] Housing Development
    \item[\textbf{Coworking Space}] Coworking Space, Non-Profit
    \item[\textbf{General College \& University}] General College \& University, Auditorium, College \& University, College Academic Building, College Administrative Building, College Arts Building, College Auditorium, College Baseball Diamond, College Basketball Court, College Bookstore, College Classroom, College Communications Building, College Cricket Pitch, College Engineering Building, College History Building, College Lab, College Library, College Math Building, College Quad, College Science Building, College Technology Building, Laboratory, Library
    \item[\textbf{Pool}] Pool, Gym Pool, Pool Hall
    \item[\textbf{Bank}] Bank, Credit Union
    \item[\textbf{Gym / Fitness Center}] Gym / Fitness Center, College Gym, Outdoors \& Recreation, Tennis, Tennis Court, Volleyball Court
    \item[\textbf{Clothing Store}] Clothing Store, Bridal Shop, Jewelry Store, Laundry Service, Lingerie Store, Shoe Store, Tailor Shop
    \item[\textbf{Diner}] Diner, Cafeteria
    \item[\textbf{Furniture / Home Store}] Furniture / Home Store, Paper / Office Supplies Store
    \item[\textbf{Dessert Shop}] Dessert Shop, Candy Store, Cheese Shop, Cupcake Shop, Donut Shop, Frozen Yogurt, Ice Cream Shop, Snack Place, Yogurt
    \item[\textbf{Grocery Store}] Grocery Store, Butcher, Gourmet Shop
    \item[\textbf{Other Great Outdoors}] Other Great Outdoors, Trail, Trails, Volcano, Volcanoes
    \item[\textbf{Burger Joint}] Burger Joint, Hot Dog Joint
    \item[\textbf{City Hall}] City Hall
    \item[\textbf{Factory}] Factory, Recycling Facility
    \item[\textbf{Student Center}] Student Center, College Rec Center, Fraternity House, Sorority House
    \item[\textbf{Miscellaneous Shop}] Miscellaneous Shop
    \item[\textbf{Cosmetics Shop}] Cosmetics Shop
    \item[\textbf{Church}] Church, Mosque, Shrine, Spiritual Center, Synagogue, Temple
    \item[\textbf{Bridge}] Bridge
    \item[\textbf{Dentist's Office}] Dentist's Office
    \item[\textbf{Movie Theater}] Movie Theater, College Theater, Indie Movie Theater, Indie Theater, Multiplex, Theater
    \item[\textbf{Neighborhood}] Neighborhood
    \item[\textbf{Doctor's Office}] Doctor's Office
    \item[\textbf{Field}] Field, Baseball Field, Baseball Stadium, Basketball Court, Basketball Stadium, College Football Field, College Hockey Rink, College Soccer Field, College Stadium, College Tennis Court, Football Stadium, Golf Course, Hockey Arena, Hockey Field, Paintball Field, Skating Rink, Ski Area, Ski Chairlift, Ski Chalet, Soccer Field, Soccer Stadium, Stadium, Track Stadium
    \item[\textbf{Post Office}] Post Office
    \item[\textbf{Italian Restaurant}] Italian Restaurant
    \item[\textbf{Hotel}] Hotel, Bed \& Breakfast, Hostel, Hotel Bar, Hotel Pool, Motel, Resort
    \item[\textbf{Playground}] Playground, Campground, College Track, Cricket Ground, Dog Run, Racetrack, Ski Trail, Track
    \item[\textbf{Salon / Barbershop}] Salon / Barbershop, Nail Salon, Tanning Salon
    \item[\textbf{Residential Building (Apartment / Condo)}] Residential Building (Apartment / Condo), Assisted Living, College Residence Hall, Residence
    \item[\textbf{High School}] High School, Elementary School, Middle School, School
    \item[\textbf{Bar}] Bar, Apres Ski Bar, Beer Garden, Brewery, Cocktail Bar, Distillery, Dive Bar, Gastropub, Gay Bar, Hookah Bar, Juice Bar, Karaoke Bar, Liquor Store, Piano Bar, Pub, Sake Bar, Speakeasy, Sports Bar, Whisky Bar, Wine Bar, Wine Shop, Winery
    \item[\textbf{other}] Animal Shelter, Antique Shop, Board Shop, Car Wash, Cemetery, Daycare, Flower Shop, Funeral Home, Gift Shop, Hobby Shop, Moving Target, Music Store, Newsstand, Optical Shop, Parking, Pet Service, Pet Store, Smoke Shop, Stable, Stables, Tattoo Parlor, Trade School, Well
\end{description}
\end{small}

}

\end{document}
\endinput